\newtheorem{lemma}{Lemma}
\DeclareRobustCommand{\nwsearrow}{%
  \mathrel{\text{\ooalign{$\searrow$\cr$\nwarrow$}}}%
}
\begin{document}

\title{Metrological Detection of Multipartite Entanglement from Young Diagrams}

\author{Zhihong Ren}
\affiliation{Institute of Theoretical Physics and Department of Physics, State Key
Laboratory of Quantum Optics and Quantum Optics Devices, Collaborative
Innovation Center of Extreme Optics, Shanxi University, Taiyuan 030006, China}
\affiliation{Laboratoire Kastler Brossel, ENS-Universit\'{e} PSL, CNRS, Sorbonne Universit\'{e}, Coll\`{e}ge de France, 24 Rue Lhomond, 75005, Paris, France}
\author{Weidong Li}
\email{wdli@sxu.edu.cn}
\affiliation{Institute of Theoretical Physics and Department of Physics, State Key
Laboratory of Quantum Optics and Quantum Optics Devices, Collaborative
Innovation Center of Extreme Optics, Shanxi University, Taiyuan 030006, China}
\author{Augusto Smerzi}
\affiliation{Institute of Theoretical Physics and Department of Physics, State Key
Laboratory of Quantum Optics and Quantum Optics Devices, Collaborative
Innovation Center of Extreme Optics, Shanxi University, Taiyuan 030006, China}
\affiliation{QSTAR, INO-CNR, and LENS, Largo Enrico Fermi 2, 50125 Firenze, Italy}
\author{Manuel Gessner}
\email{manuel.gessner@ens.fr}
\affiliation{Laboratoire Kastler Brossel, ENS-Universit\'{e} PSL, CNRS, Sorbonne Universit\'{e}, Coll\`{e}ge de France, 24 Rue Lhomond, 75005, Paris, France}
\date{\today}

\begin{abstract}
We characterize metrologically useful multipartite entanglement by representing partitions with Young diagrams. We derive entanglement witnesses that are sensitive to the shape of Young diagrams and show that Dyson's rank acts as a resource for quantum metrology. Common quantifiers, such as the entanglement depth and $k$-separability are contained in this approach as the diagram's width and height. Our methods are experimentally accessible in a wide range of atomic systems, as we illustrate by analyzing published data on the quantum Fisher information and spin-squeezing coefficients.
\end{abstract}

\maketitle

An efficient classification of entanglement in multipartite systems is crucial for our understanding of quantum many-body systems and the development of quantum information science~\cite{AmicoRMP2006,HaukeNATPHYS2016,FrerotNATCOMMUN2019,GabbrielliNJP2019}. A particular challenge is the development of experimentally implementable criteria for the detection of multipartite entanglement~\cite{GuehneTothPHYSREP2009,FriisNATREV2019}. The development of quantum technologies further demands a precise understanding of the set of multipartite entangled states that enable a quantum advantage in specific applications of quantum information~\cite{ApellanizJPA2014,PezzeRMP2018}. In this context, metrological entanglement criteria are powerful tools that establish a quantitative link between the number of detected entangled parties and the quantum gain in interferometric measurements~\cite{PezzePRL2009,GiovannettiNATPHOTON2011,HyllusTothPRA2012,GessnerPRL2018,PezzeRMP2018}.

As a consequence of the exponentially increasing number of partitions in multipartite systems, there is no unique way to quantify multipartite entanglement. Common approaches to capture the extent of multipartite correlations focus on simple integer indicators~\cite{GuehneTothPHYSREP2009}: An entanglement depth of $w$ describes that at least $w$ parties must be entangled, while $h$-inseparability expresses that the system cannot be split into $h$ separable subsystems. Larger values of $w$ and smaller values of $h$ generally indicate more multipartite entanglement, and experimentally observable bounds on both can be obtained with different methods~\cite{GuehneTothPHYSREP2009,FriisNATREV2019}, including from the metrological sensitivity in terms of the quantum Fisher information~\cite{HyllusTothPRA2012}.

A systematic approach based on the partitions of a multipartite system reveals a duality between $w$ and $h$~\cite{SzylardQUANTUM2019}. Let us illustrate this with the example of a 7-partite system that allows for a separable description in the partition $\Lambda=1|2345|67$, see Fig.~\ref{fig:0}. The system is separable into $h=3$ subsets and it contains entanglement among up to $w=4$ parties, i.e., it has an entanglement depth of $w=4$. By using the correspondence between partitions of a system (up to permutations of the particle labels) and Young diagrams, we can represent this partition as $\Lambda\sim{\tiny{\yng(4,2,1)}}$, where each box represents one party and each row represents an entangled subset of decreasing size from top to bottom. We can easily convince ourselves that $w$ and $h$ correspond to the width and height of the Young diagram, respectively.

\begin{figure}[bt]
\centering
\includegraphics[width=0.35\textwidth]{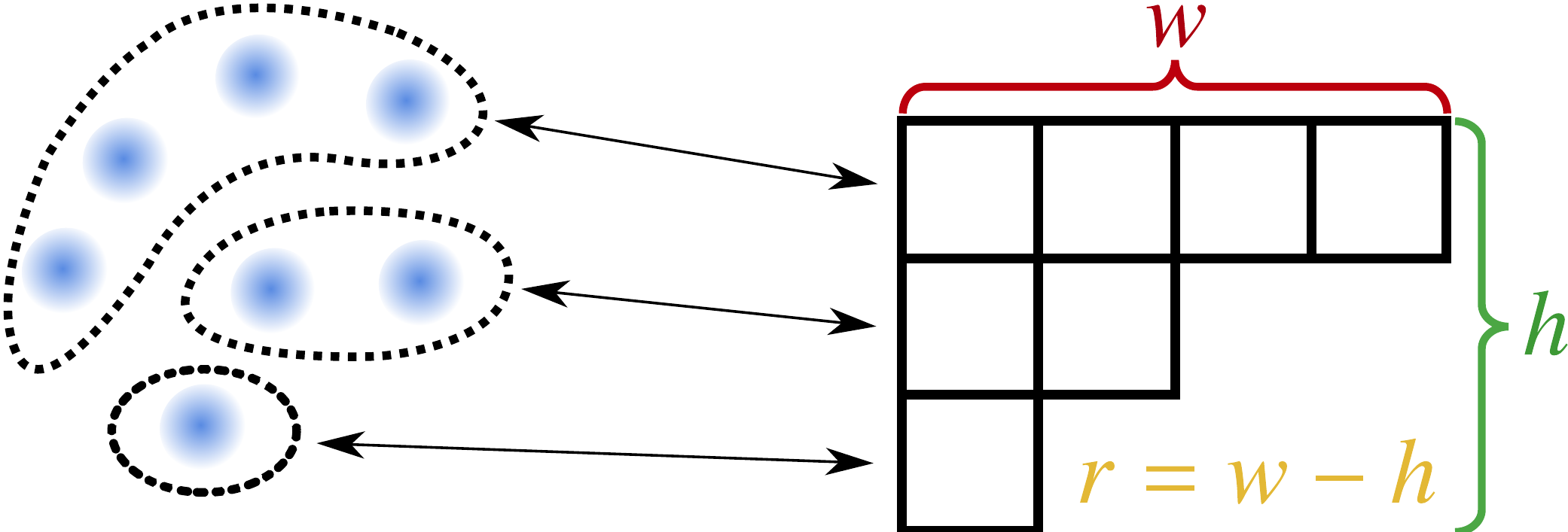}
\caption{Representation of multipartite entanglement using Young diagrams. In this example, a system of $N=7$ particles is separable in a partition into $h=3$ subsets and contains an entanglement depth of $w=4$ particles. These quantities correspond to height $h$ and width $w$ of the associated Young diagram. Dyson's rank $r=w-h$ combines both pieces of information.}
\label{fig:0}
\end{figure}

Focusing exclusively on one of these two quantities provides only limited information about the allowed structure of separable partitions. The entanglement depth $w$ refers to the size of the largest subset but ignores the size and number of the remaining subsets. For instance, $w=4$ does not distinguish between the partition $\Lambda$ and, e.g., $\Lambda'\sim{\tiny{\yng(4,3)}}$, even though the latter clearly contains more entanglement. Separability, in contrast, is insensitive to the size of the entangled subsets and $h=3$ is also compatible with, e.g., $\Lambda''\sim{\tiny{\yng(3,3,1)}}$. As an alternative integer quantifier, the rank of a partition, defined by Dyson~\cite{DysonEUREKA1944} as $r=w-h$, combines the information about $w$ and $h$, and was recently suggested to express the ``strechability'' of correlations~\cite{SzylardQUANTUM2019}. In our example, it successfully distinguishes these partitions and yields the intuitive order $r(\Lambda'')=0$, $r(\Lambda)=1$, and $r(\Lambda')=2$. With $2N-3$ steps between fully separable and genuine $N$-partite entangled states, $r$ provides a scale almost twice as fine as those provided by the $N$ possible values of $w$ and $h$, respectively.

In this work, we derive metrological entanglement criteria that provide combined information about $w$ and $h$, or about Dyson's rank $r$. We base our criteria on  quantifiers of metrological sensitivity that are widely used both in theory and experiments. Our results reveal hidden details about the structure of multipartite entanglement from the quantum Fisher information or spin-squeezing parameters, while relying only on established measurement techniques. This leads to a better understanding of metrologically useful multipartite entanglement, and uncovers in particular the role of $r$ as a resource for quantum-enhanced metrology. As a single integer quantifier, $r$ is found to provide most information about multipartite entanglement in the experimentally relevant regime of limited multipartite entanglement. The entanglement depth $w$, instead, is shown to be most effective close to genuine $N$-partite entanglement.

\textit{Metrological witness for (w,h)-entanglement.}---We characterize the degree of multipartite entanglement in terms of the partitions that are compatible with a separable description of the correlations. A partition $\Lambda=\{A_1,A_2,\dots,A_{|\Lambda|}\}$ separates the total $N$-partite system into $|\Lambda|$ nonempty, disjoint subsets $A_l$ of size $N_l$ such that $\sum_{l=1}^{|\Lambda|}N_l=N$. A state $\hat{\rho}_{\Lambda}$ is $\Lambda$-separable if there exist local quantum states $\hat{\rho}^{(\gamma)}_{A_l}$ for each subsystem and a probability distribution $p_{\gamma}$ such that $\hat{\rho}_{\Lambda}=\sum_{\gamma}p_{\gamma}\hat{\rho}^{(\gamma)}_{A_1}\otimes\cdots\otimes\hat{\rho}^{(\gamma)}_{A_{|\Lambda|}}$. The partitions can be classified according to the number $|\Lambda|$ of subsets and the size $\max \Lambda=\max_l|A_l|$ of the largest subset, i.e., the respective height $h$ and width $w$ of the associated Young diagram. The entanglement depth is defined with respect to the set $\mathcal{L}_{w-\mathrm{prod}}=\{\Lambda\:|\:\max \Lambda\leq w\}$ of partitions with maximal width $w$. Any state that cannot be written as a $w$-producible state $\hat{\rho}_{w-\rm{prod}}=\sum_{\Lambda\in\mathcal{L}_{w-\mathrm{prod}}}P_{\Lambda}\hat{\rho}_{\Lambda}$, where $P_{\Lambda}$ is a probability distribution, has an entanglement depth of at least $w+1$. Analogously, inseparability is related to the set $\mathcal{L}_{h-\mathrm{sep}}=\{\Lambda\:|\:|\Lambda|\geq h\}$ of partitions with minimal height $h$ and $h$-inseparable states cannot be represented in the form $\hat{\rho}_{h-\rm{sep}}=\sum_{\Lambda\in\mathcal{L}_{h-\mathrm{sep}}}P_{\Lambda}\hat{\rho}_{\Lambda}$. By combining both pieces of information, we obtain a finer description of multipartite quantum correlations by the set of $(w,h)$-entangled states, i.e., those that cannot be modeled as $(w,h)$-separable states
\begin{align}
\hat{\rho}_{(w,h)-\rm{sep}}=\sum_{\Lambda\in\mathcal{L}_{w-\mathrm{prod}}\cap\mathcal{L}_{h-\mathrm{sep}}}P_{\Lambda}\hat{\rho}_{\Lambda}.
\end{align}

To derive criteria that allow us to distinguish between different $(w,h)$ classes, we derive the metrological sensitivity limits for states with restricted values of $w$ and $h$. Measuring or calculating the sensitivity of a state then allows us to put bounds on $w$ and $h$ by comparison with these limits. In the following we focus on $N$-qubit systems, described by collective angular momentum operators $\hat{J}_{\mathbf{n}}=\sum_{i=1}^N\mathbf{n}\cdot\hat{\boldsymbol{\sigma}}^{(i)}/2$ with unit vector $\mathbf{n}\in\mathbb{R}^3$ and $\hat{\boldsymbol{\sigma}}^{(i)}=(\hat{\sigma}_x^{(i)},\hat{\sigma}_y^{(i)},\hat{\sigma}_z^{(i)})$ a vector of Pauli matrices for the $i$th qubit. The central theorem of quantum metrology, the quantum Cram\'er-Rao bound $(\Delta\theta_{\mathrm{est}})^2\geq 1/F_Q[\hat{\rho},\hat{J}_{\mathbf{n}}]$, defines the achievable precision limit for the estimation of a phase shift $\theta$ generated by $\hat{J}_{\mathbf{n}}$, using the state $\hat{\rho}$~\cite{BraunsteinPRL1994,GiovannettiPRL2006,GiovannettiNATPHOTON2011,ApellanizJPA2014,PezzeRMP2018}. The phase $\theta$ is estimated from measurements of the quantum state $\hat{\rho}(\theta)=\hat{U}(\theta)\hat{\rho}\hat{U}(\theta)^{\dagger}$ with $\hat{U}(\theta)=e^{-i\hat{J}_{\mathbf{n}}\theta}$ and the quantum Fisher information $F_Q[\hat{\rho},\hat{J}_{\mathbf{n}}]$ describes the sensitivity of $\hat{\rho}(\theta)$ to small variations of $\theta$~\cite{BraunsteinPRL1994}. As an experimentally accessible quantity, $F_Q$ has been employed in the past as a versatile entanglement witness~\cite{PezzePRL2009,HyllusTothPRA2012,PezzeRMP2018,HaukeNATPHYS2016,GabbrielliNJP2019,GessnerPRA2016,StrobelSCIENCE2014,QinNPJQI2019}.

We are now in a position to present the main results of this work. The quantum Fisher information of any $(w,h)$-separable state is limited to
\begin{align}\label{eq:Fwh}
F_Q[\hat{\rho}_{(w,h)-\rm{sep}},\hat{J}_{\mathbf{n}}]\leq w(N-h)+N,
\end{align}
where $N/h\leq w \leq N-h+1$ and $N/w\leq h \leq N-w+1$. The bound~(\ref{eq:Fwh}) can be slightly tightened by explicitly considering the division of $N$ into integer subsets, and in this case it is saturated by an optimal quantum state. A detailed proof of~(\ref{eq:Fwh}) in its most general form, as well as the optimal states are provided in~\cite{Supp}. The monotonic growth of Eq.~(\ref{eq:Fwh}) in $w$ and its monotonic decrease in $h$ demonstrate that higher quantum advantages in metrology measurements require entanglement among larger sets of particles.

In the extreme cases where all or none of the parties are entangled, we recover the well-known limits of classical and quantum parameter estimation strategies, respectively~\cite{GiovannettiPRL2006}. Fully separable states, defined by $(w,h)=(1,N)$, are limited to $F_Q[\hat{\rho}_{(1,N)-\rm{sep}},\hat{J}_{\mathbf{n}}]\leq F_{\rm SN}[\hat{J}_{\mathbf{n}}]=N$, which leads to shot-noise sensitivity $(\Delta \theta_{\rm{est}})^2\geq 1/N$, while genuine $N$-partite entanglement, $(w,h)=(N,1)$, enables sensitivities up to the Heisenberg limit $F_Q[\hat{\rho}_{(N,1)-\rm{sep}},\hat{J}_{\mathbf{n}}]\leq N^2$ with $(\Delta \theta_{\rm{est}})^2\geq 1/N^2$~\cite{GiovannettiNATPHOTON2011,PezzePRL2009,PezzeRMP2018}. In between these extreme cases, the metrological potential of finitely entangled states is captured by the combined information provided by the tuple $(w,h)$. The metrological entanglement witness~(\ref{eq:Fwh}) has a particularly simple interpretation: It identifies the quantum advantage offered by $(w,h)$-entanglement in terms of the sensitivity difference to the shot-noise limit, $Q[\hat{\rho},\hat{J}_{\mathbf{n}}]:=F_Q[\hat{\rho},\hat{J}_{\mathbf{n}}]-F_{\rm SN}[\hat{J}_{\mathbf{n}}]$. The advantage is indeed bounded for $(w,h)$-separable states by
\begin{align}\label{eq:Qwh}
Q[\hat{\rho}_{(w,h)-\rm{sep}},\hat{J}_{\mathbf{n}}]\leq w(N-h).
\end{align}
A sensitivity that exceeds the shot-noise limit beyond this bound consequently implies metrologically useful $(w,h)$-entanglement.

\begin{figure}[tb]
\centering
\includegraphics[width=0.47\textwidth]{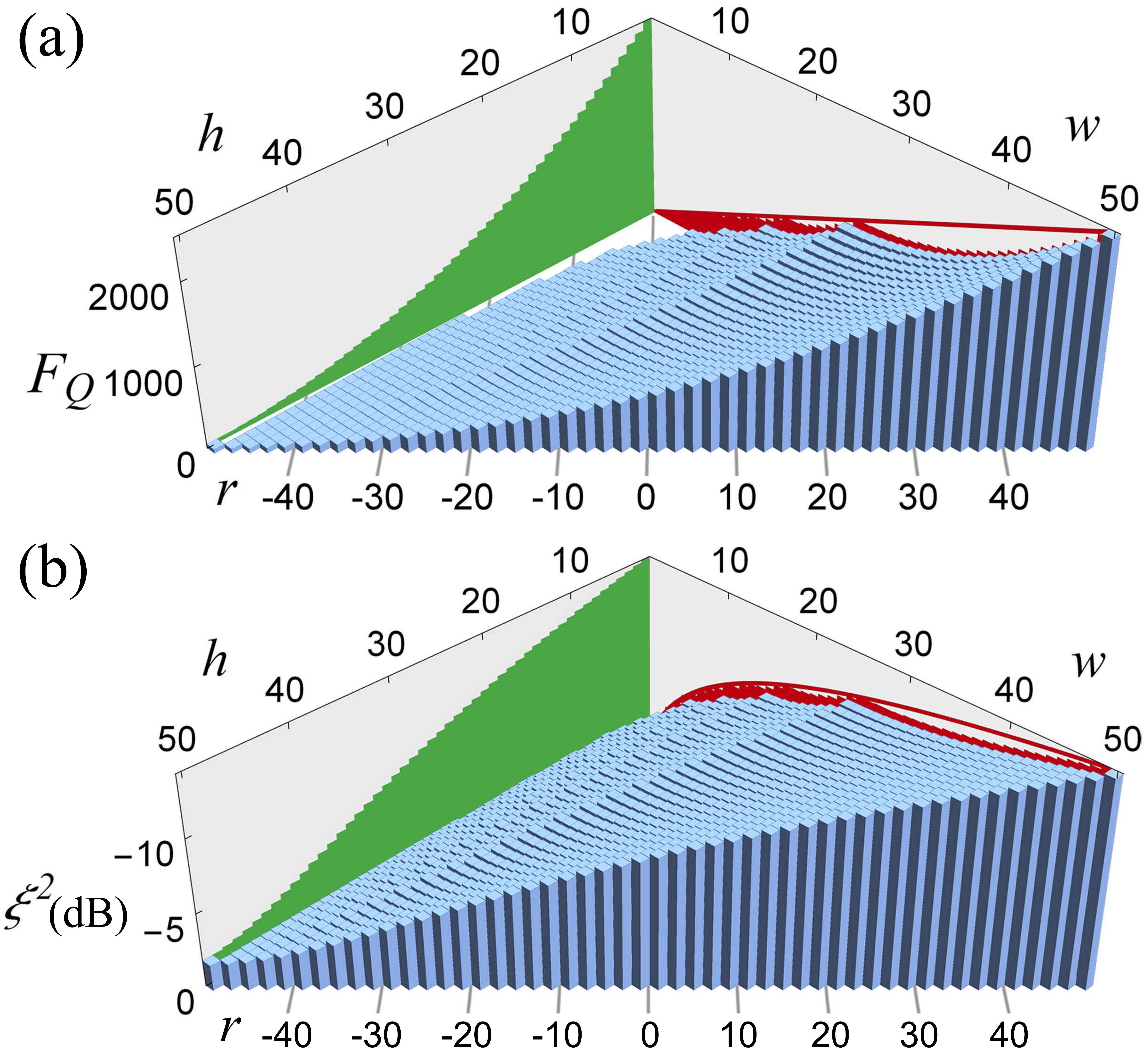}
\caption{Maximal metrological sensitivity for $(w,h)$-separable states as a function of $w$ and $h$. The sensitivity limits are given for (a) the quantum Fisher information $F_Q[\hat{\rho},\hat{J}_{\mathbf{n}}]$ and (b) the spin-squeezing parameter $\xi^2$. By ignoring either $w$ or $h$, we obtain bounds for $w$-producible (red projection) and $h$-separable states (green projection), respectively. A finer projection is given by Dyson's rank $r=w-h$~\cite{footnote}.}
\label{fig:1}
\end{figure}

We recover known bounds that only provide information on either $w$ or $h$ by ignoring part of the information contained in~(\ref{eq:Fwh}). For instance, by replacing $h$ with the trivial lower bound $N/w$, we obtain the well-known sensitivity limit of $w$-producible states~\cite{HyllusTothPRA2012}
\begin{align}\label{eq:Fw}
F_Q[\hat{\rho}_{w-\rm{prod}},\hat{J}_{\mathbf{n}}]\leq wN,
\end{align}
where $1\leq w\leq N$. The result~(\ref{eq:Fwh}) thus generalizes~(\ref{eq:Fw}) which has enabled the widespread study of multiparticle entanglement in quantum metrology~\cite{PezzeRMP2018}, but also provides a valuable tool to understand entanglement in quantum-many body systems~\cite{HaukeNATPHYS2016,GabbrielliNJP2019} and topological quantum phase transitions~\cite{PezzePRL2017}. Similarly, we can ignore the information about $w$ by using the trivial upper bound $N-h+1$, yielding the sensitivity limit of $h$-separable states~\cite{HongPRA2015}
\begin{align}\label{eq:Fh}
F_Q[\hat{\rho}_{h-\rm{sep}},\hat{J}_{\mathbf{n}}]\leq (N-h+1)^2+h-1,
\end{align}
where $1\leq h\leq N$.

Rather than fully ignoring the information provided by either $w$ or $h$, we combine both into a more informative integer quantifier of multipartite entanglement. Dyson's rank $r=w-h$ reflects the increase of correlations due to both larger $w$ and smaller $h$. The range of $r$ are the integer values from $-(N-1)$ to $N-1$ except $\pm(N-2)$~\cite{DysonEUREKA1944}. The set of states with Dyson's rank not larger than $r$ is defined as $\hat{\rho}_{r-\rm{rnk}}=\sum_{\Lambda\in\mathcal{L}_{r-\mathrm{rnk}}}P_{\Lambda}\hat{\rho}_{\Lambda}$ via the set $\mathcal{L}_{r-\mathrm{rnk}}=\{\Lambda\:|\:\max \Lambda-|\Lambda|\leq r\}$~\cite{SzylardQUANTUM2019}. We obtain the bound~\cite{Supp}
\begin{align}\label{eq:Fr}
F_Q[\hat{\rho}_{r-\rm{rnk}},\hat{J}_{\mathbf{n}}]\leq \frac{(N+r)^2}{4}-\frac{1}{4}+N,
\end{align}
for all values of $r$, except for $r=4-N$, where we have $F_Q[\hat{\rho}_{(4-N)-\rm{rnk}},\hat{J}_{\mathbf{n}}]\leq N+4$. The first term in~(\ref{eq:Fr}) clearly identifies the quadratic quantum advantage over the shot-noise limit offered by states with larger Dyson's rank $r$ in terms of $Q[\hat{\rho},\hat{J}_{\mathbf{n}}]$.

The upper bounds for $(w,h)$-separable states given in~(\ref{eq:Fwh}) are represented as a function of $w$ and $h$ in Fig.~\ref{fig:1}~(a) \cite{footnote}. The bounds on producibility~(\ref{eq:Fw}) and separability~(\ref{eq:Fh}) are recovered as the projections onto the axes describing $w$ or $h$ (red and green plots, respectively). Since these correspond to the short arms of the right triangle (blue columns) that is occupied by tuples $(w,h)$, these projections ignore large amounts of information on the respective other coordinate. A finer resolution can be obtained by the projection along the hypothenuse that is described by $r=w-h$. The most detailed information about multipartite entanglement is provided by the tuples $(w,h)$.

\begin{figure}[tb]
\centering
\includegraphics[width=0.49\textwidth]{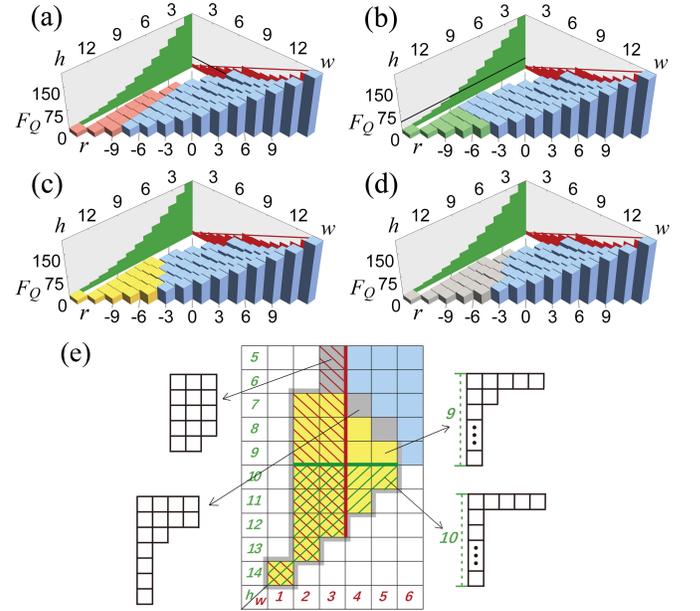}
\caption{Tuples $(w,h)$ whose separability can be excluded from the experimentally extracted value of $F_Q\geq 40.4$ for $N=14$ parties in Ref.~\cite{MonzPRL2011} are highlighted. The red tuples (a) are excluded from the entanglement depth~(\ref{eq:Fw}), the green tuples (b) from $h$-separability~(\ref{eq:Fh}) and the yellow tuples (c) from Dyson's rank~(\ref{eq:Fr}). The grey tuples~(d) are found inseparable by~(\ref{eq:Fwh}), that uses the full information on $(w,h)$~\cite{footnote}. The results are summarized in the top-view plot (e), where each square represents one tuple $(w,h)$ and example partitions with given width $w$ and height $h$ are illustrated.}
\label{fig:2}
\end{figure}

\textit{Analysis of experimental data.---}Our results allow us to extract information about these quantities directly from $F_Q$ without the need for additional measurements. To illustrate the power of this technique, we study the separability structure of experimentally generated quantum states based on published lower bounds for $F_Q$. We first focus on an example with moderate particle number $N=14$, reported in~\cite{MonzPRL2011}. In this trapped-ion experiment a quantum Fisher information of at least $F_Q[\hat{\rho},\hat{J}_z]\geq 40.4$ has been observed~\cite{PezzePNAS2016}. The performance of the different bounds can be gauged by the number of separable $(w,h)$ classes, i.e., tuples $(w,h)$ that are excluded. Note that more than one partition may be compatible with a tuple $(w,h)$. From Eq.~(\ref{eq:Fw}) and its sharper version~\cite{Supp}, we find that the measured data is incompatible with partitions of width $w\leq 3$, implying an entanglement depth of $w=4$, which excludes $16$ tuples (Fig.~\ref{fig:2}a). Similarly, from~(\ref{eq:Fh}) we find $h=9$, excluding the $11$ tuples of the system into more than $10$ parts (Fig.~\ref{fig:2}b). Much more information is obtained by using the bound~(\ref{eq:Fwh}), which excludes a total of $24$ separable tuples (Fig.~\ref{fig:2}d)~\cite{footnote}. Among all single integer quantifiers, Dyson's rank, obtained from~(\ref{eq:Fr}) to be $r=-3$, detects the largest amount of $20$ separable tuples (Fig.~\ref{fig:2}c)~\cite{footnote}. The excluded tuples for each criterion are summarized in Fig.~\ref{fig:2}~(e), where we also highlight specific inseparable partitions that remain undetected by the individual information on $w$ or $h$.

\begin{figure}[tb]
\centering
\includegraphics[width=0.49\textwidth]{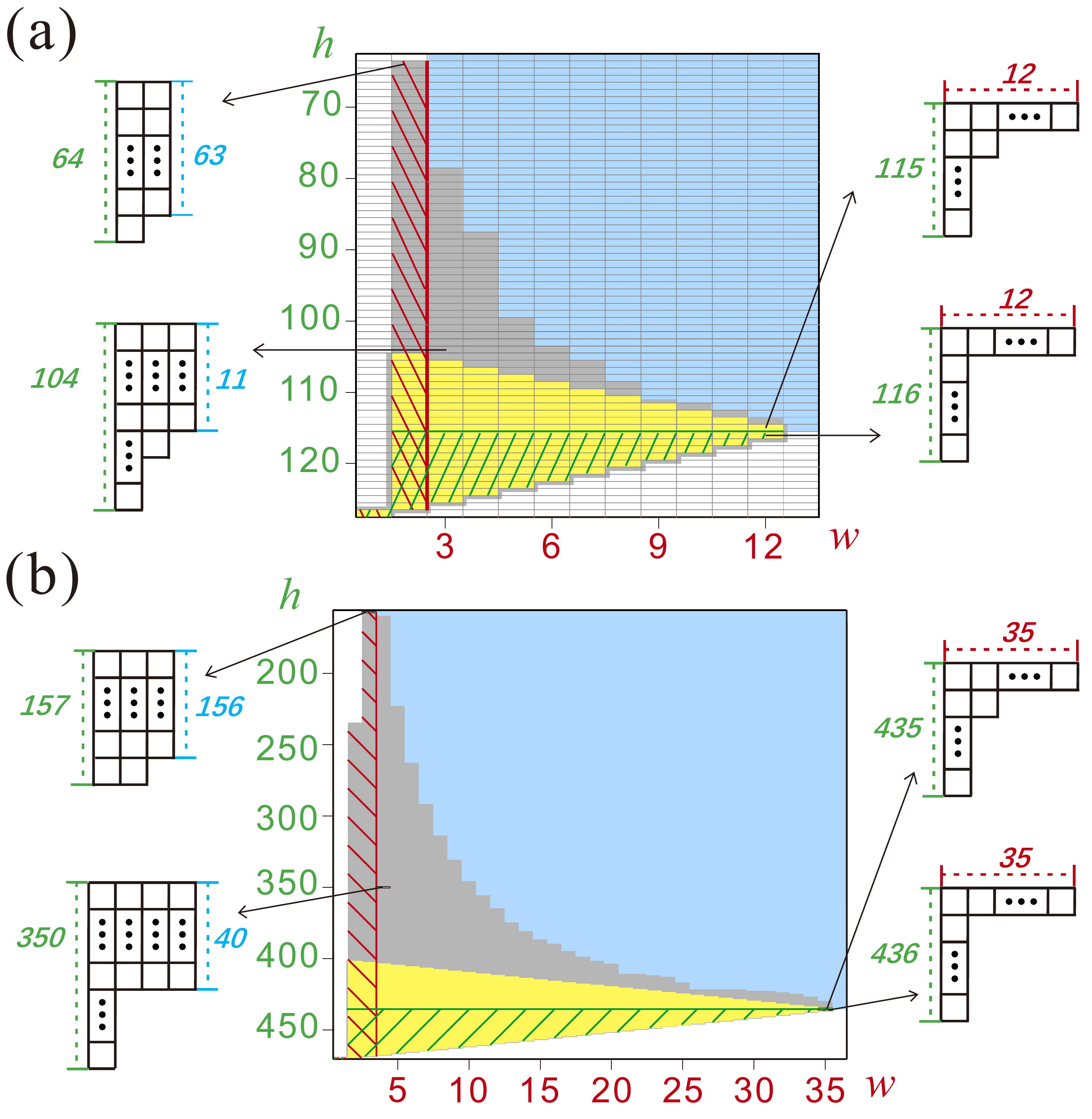}
\caption{(a) Analysis of the multipartite entanglement structure (same as in Fig.~\ref{fig:2}e) for $N=127$ parties based on the measurement of $F_Q[\hat{\rho},\hat{J}_z]\geq 266.7$ reported in Ref.~\cite{BohnetSCIENCE2016}. (b) Same analysis based on the spin-squeezing coefficient~(\ref{eq:xiFQmax}) for a measurement of $\xi^2\leq-4.5\,\mathrm{dB}$ with $N=470$ particles~\cite{StrobelSCIENCE2014}.}
\label{fig:9}
\end{figure}

This technique may also be applied to systems with larger particle numbers, as we illustrate through the analysis of additional data on measurements of lower bounds on the quantum Fisher information in systems of cold atoms and ions published in Refs.~\cite{MonzPRL2011,BarontiniSCIENCE2015,BohnetSCIENCE2016}. While for a full account, we refer to~\cite{Supp}, in Fig.~\ref{fig:9}~(a) we show results obtained from the measurement of $F_Q[\hat{\rho},\hat{J}_z]\geq 266.7$ with $N=127$ trapped ions, as announced in~\cite{BohnetSCIENCE2016}. Generally, we find that $r$ is the most sensitive single integer quantifier of multipartite entanglement in the experimentally most relevant regime of finite entanglement. Only close to the limit of genuine multipartite entanglement, the entanglement depth $w$ becomes slightly more sensitive than $r$, while at no point do we gain more information from $h$~\cite{Supp}. 

\textit{Bounds for the spin-squeezing coefficient.}---We have so far focused on metrological entanglement witnesses that make use of the quantum Fisher information, the ultimate sensitivity limit achievable by an optimal measurement. In many experimental situations, it is more convenient to study the precision with respect to the specific measurement of a collective spin observable. This is achieved by spin-squeezing coefficients~\cite{WinelandPRA1994,MaPHYSREP2011,PezzeRMP2018}, first introduced by Wineland \textit{et al.} as $\xi^2=N(\Delta \hat{J}_{\mathbf{n}})^2/\langle \hat{J}_{\mathbf{m}}\rangle^2$, with suitably chosen, orthogonal directions $\mathbf{n}$ and $\mathbf{m}$~\cite{WinelandPRA1994}. The spin-squeezing coefficient expresses the quantum gain in sensitivity over the shot-noise limit due to squeezing of a spin observable $\hat{J}_{\mathbf{n}}$ and has found widespread application in experiments with atomic systems~\cite{PezzeRMP2018}. Spin squeezing further gives rise to lower bounds on the quantum Fisher information~\cite{PezzePRL2009,GessnerPRL2019} and provides an experimentally convenient witness for the entanglement depth $w$~\cite{SorensenNATURE2001,SorensenPRL2001,ApellanizJPA2014,FadelPRA2020}. We derive state-independent bounds on $\xi^2$ that are sensitive to both $w$ and $h$, by relating the spin-squeezing coefficient to the bounds that we found for the quantum Fisher information. Specifically, we show that~\cite{Supp}
\begin{align}\label{eq:xiFQmax}
N\max_{\hat{\rho}_{\Lambda-\rm{sep}}}\xi_{\hat{\rho}_{\Lambda-\rm{sep}}}^{-2}\leq \frac{1}{2} \max_{\hat{\rho}_{\Lambda-\rm{sep}}}F_Q[\hat{\rho}_{\Lambda-\rm{sep}},\hat{J}_{\mathbf{n}}]+N,
\end{align}
which allows us to use our results~(\ref{eq:Fwh}), (\ref{eq:Fw}), (\ref{eq:Fh}), and~(\ref{eq:Fr}) on $F_Q$ to identify limits on $\xi^2$ as a function of $w$, $h$ or $r$~\cite{Supp}. These bounds are shown in Fig.~\ref{fig:1}~(b). For instance, from~(\ref{eq:Fr}) we obtain the limit
\begin{align}\label{eq:xir}
\xi_{\rho_{r-\mathrm{rnk}}}^{2}&\geq \frac{8N}{(N+r)^2+12N-1}
\end{align}
for states with Dyson's rank no larger than $r$. In Fig.~\ref{fig:9}~(b) we summarize the entanglement analysis based on the experimentally measured value of $\xi^2\leq-4.5\,\mathrm{dB}$ of spin squeezing for $N=470$ particles, reported in~\cite{StrobelSCIENCE2014}.

\begin{figure}[tb]
\centering
\includegraphics[width=0.49\textwidth]{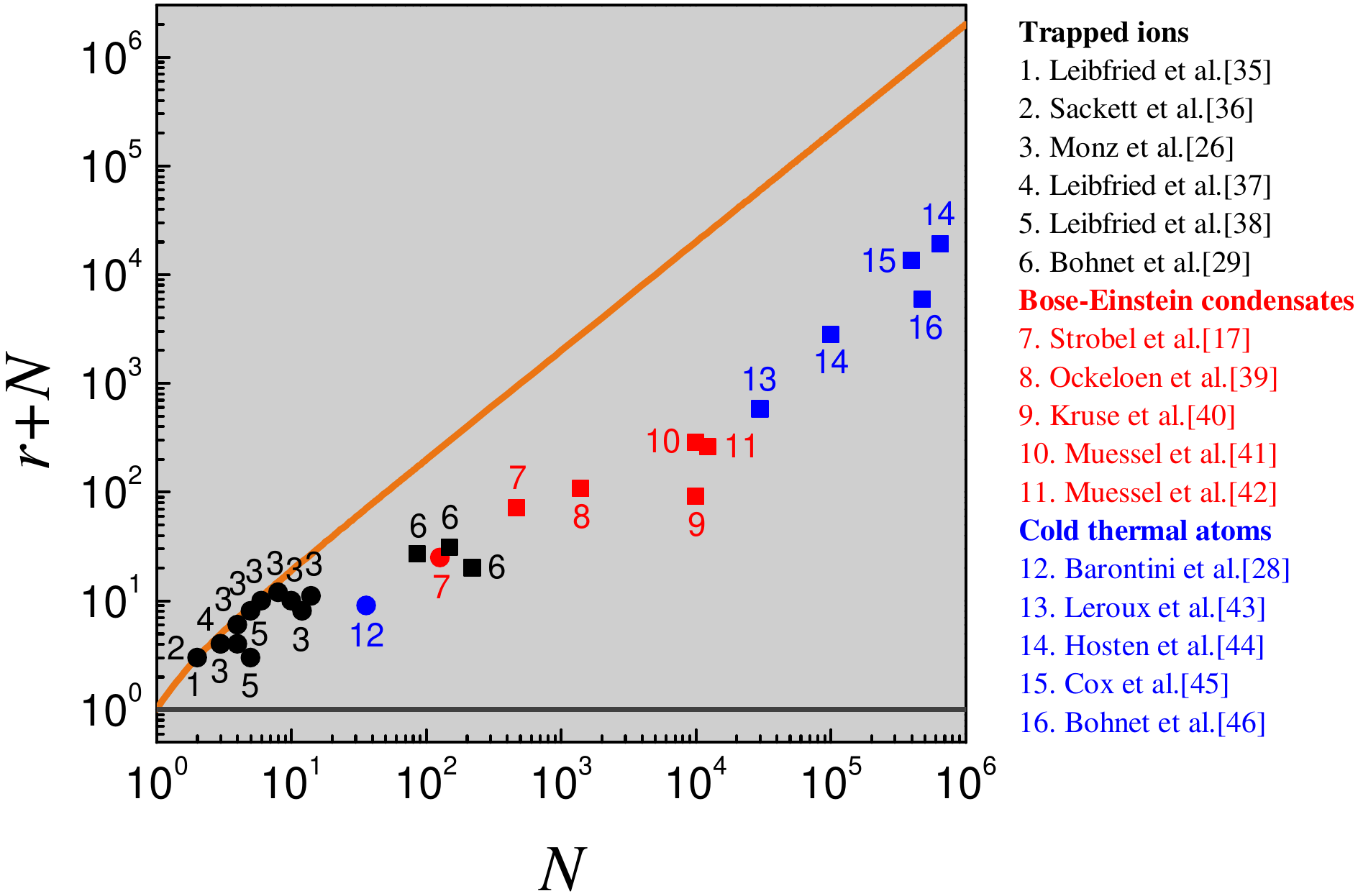}
\caption{Dyson's rank $r$ as a function of $N$ extracted from published experimental data on the spin-squeezing coefficient $\xi^2$ (squares) or the quantum Fisher information $F_Q$ (circles) based the results reported in Refs.~\cite{MonzPRL2011,BarontiniSCIENCE2015,BohnetSCIENCE2016,StrobelSCIENCE2014,SackettNature2000,LeibfriedNature2003,LeibfriedScience2004,LeibfriedNature2005,OckeloenPRL2013,MuesselPRL2014,MuesselPRA2015,KrusePRL2016,LerouxPRL2010a,BohnetNPHOTON2014,HostenNature2016,CoxPRL2016,PezzeRMP2018,PezzePNAS2016}. The quantity $r+N$ ranges between $1$ for fully separable states (black line) and $2N-1$ for genuine $N$-partite entangled states (orange line).}
\label{fig:rsummary}
\end{figure}

From the widely available experimental data on $\xi^2$ and $F_Q$, we can immediately extract measured values of Dyson's rank $r$ using the bounds~(\ref{eq:Fr}) and~(\ref{eq:xir})~\cite{footnote}. In Fig.~\ref{fig:rsummary} we analyze experimental data of Refs.~\cite{MonzPRL2011,BarontiniSCIENCE2015,BohnetSCIENCE2016,StrobelSCIENCE2014,SackettNature2000,LeibfriedNature2003,LeibfriedScience2004,LeibfriedNature2005,OckeloenPRL2013,MuesselPRL2014,MuesselPRA2015,KrusePRL2016,LerouxPRL2010a,BohnetNPHOTON2014,HostenNature2016,CoxPRL2016,PezzeRMP2018,PezzePNAS2016}, describing experiments with trapped ions, Bose-Einstein condensates and cold thermal atoms. In order to be able to compare the measurements with different numbers of particles $N$, we plot $r+N$ with values in the range $[1,2N-1]$.

\textit{Conclusions.}---Based on widely used quantifiers for metrological sensitivity, we have derived entanglement witnesses that are sensitive to combined constraints on the size $w$ of the largest entangled group ($w$-producibility or entanglement depth) and the number $h$ of separable groups ($h$-separability). The description of inseparable partitions in terms of Young diagrams has allowed us to gain a precise understanding of metrologically useful multipartite entanglement beyond the information that can be provided by either $w$ or $h$ individually. Our techniques can be readily implemented for the theoretical and experimental study of multipartite entanglement in quantum information and many-body physics.

\textit{Acknowledgments.}---This research was supported by the National Natural Science Foundation of China (Grant No. 11874247), National Key R\&D Program of China (Grant No. 2017YFA0304500), 111 project (Grant No. D18001), the Hundred Talent Program of the Shanxi Province (2018), and by the LabEx ENS-ICFP: ANR-10-LABX-0010/ANR-10-IDEX-0001-02 PSL*.

\onecolumngrid

\clearpage 
\twocolumngrid

\renewcommand\thefigure{S\arabic{figure}} 

\appendix
\begin{center}\large{\textbf{Supplemental Material}}\end{center}
\section{Sensitivity limits on $(w,h)$-separable states}
We derive the sensitivity limit for any $(w,h)$-separable state $\hat{\rho}_{(w,h)-\rm{sep}}=\sum_{\Lambda\in\mathcal{L}_{w-\mathrm{prod}}\cap\mathcal{L}_{h-\mathrm{sep}}}P_{\Lambda}\hat{\rho}_{\Lambda}$, i.e., Eq.~(\ref{eq:Fwh}) in the main text and its more general version. To this end, we first derive the sensitivity limit for states that are separable in an arbitrary partition $\Lambda$. The convexity of the quantum Fisher information~\cite{Varenna} implies that
\begin{align}\label{eq:FLambdawh}
F_Q[\hat{\rho}_{(w,h)-\rm{sep}},\hat{J}_{\mathbf{n}}]\leq \sum_{\Lambda\in\mathcal{L}_{w-\mathrm{prod}}\cap\mathcal{L}_{h-\mathrm{sep}}}P_{\Lambda}F_Q[\hat{\rho}_{\Lambda},\hat{J}_{\mathbf{n}}].
\end{align}
Moreover, let $\Lambda\in\mathcal{L}_{w-\mathrm{prod}}\cap\mathcal{L}_{h-\mathrm{sep}}$ and $\hat{\rho}_{\Lambda}=\sum_{\gamma}p_{\gamma}\hat{\rho}^{(\gamma)}_{A_1}\otimes\cdots\otimes\hat{\rho}^{(\gamma)}_{A_{|\Lambda|}}$. We obtain, again, from convexity that
\begin{align}\label{eq:Fconvex}
F_Q[\hat{\rho}_{\Lambda},\hat{J}_{\mathbf{n}}]\leq \sum_{\gamma}p_{\gamma}F_Q[\hat{\rho}^{(\gamma)}_{A_1}\otimes\cdots\otimes\hat{\rho}^{(\gamma)}_{A_{|\Lambda|}},\hat{J}_{\mathbf{n}}].
\end{align}
We may decompose the collective angular momentum operator into the subsets defined by $\Lambda$ as
\begin{align}
\hat{J}_{\mathbf{n}}=\sum_{l=1}^{|\Lambda|}\hat{J}_{A_l,\mathbf{n}},
\end{align}
where
\begin{align}
\hat{J}_{A_l,\mathbf{n}}=\frac{1}{2}\sum_{i\in A_l}\mathbf{n}\cdot\hat{\boldsymbol{\sigma}}^{(i)}
\end{align}
is an operator of total angular momentum $N_l/2$ and $N_l=\sum_{i\in A_l}$ is the number of qubits in the subset $A_l$. Additivity~\cite{Varenna} now implies that
\begin{align}\label{eq:Fadditive}
F_Q[\hat{\rho}^{(\gamma)}_{A_1}\otimes\cdots\otimes\hat{\rho}^{(\gamma)}_{A_{|\Lambda|}},\hat{J}_{\mathbf{n}}]=\sum_{l=1}^{|\Lambda|}F_Q[\hat{\rho}^{(\gamma)}_{A_l},\hat{J}_{A_l,\mathbf{n}}].
\end{align}
The quantum Fisher information obeys the following sequence of bounds~\cite{Varenna},
\begin{align}\label{eq:QFIVarbound}
F_Q[\hat{\rho}^{(\gamma)}_{A_l},\hat{J}_{A_l,\mathbf{n}}]&\leq 4(\Delta \hat{J}_{A_l,\mathbf{n}})^2_{\hat{\rho}^{(\gamma)}_{A_l}}\notag\\
&\leq N_l^2.
\end{align}
The first bound is saturated by all pure states while the second is achieved if and only if $\hat{\rho}^{(\gamma)}_{A_l}$ is an $N_l$-qubit Greenberger-Horne-Zeilinger (GHZ) state $|\mathrm{GHZ}_{N_l}\rangle=({|\!\downarrow\rangle^{\otimes N_l}}+e^{i\phi}{|\!\uparrow\rangle^{\otimes N_l}})/\sqrt{2}$ with arbitrary phase $\phi$. Inserting~(\ref{eq:QFIVarbound}) and~(\ref{eq:Fadditive}) into~(\ref{eq:Fconvex}) yields the upper bound
\begin{align}\label{eq:maxLambda}
F_Q[\hat{\rho}_{\Lambda-\rm{sep}},\hat{J}_{\mathbf{n}}]\leq \sum_{l=1}^{|\Lambda|}N_l^2.
\end{align}
This tight bound can thus be used to check the inseparability of each Young diagram individually: Each row with width $N_l$ of the diagram can contribute a sensitivity of at most $N_l^2$.

\begin{figure}[tb]
\centering
\includegraphics[width=0.25\textwidth]{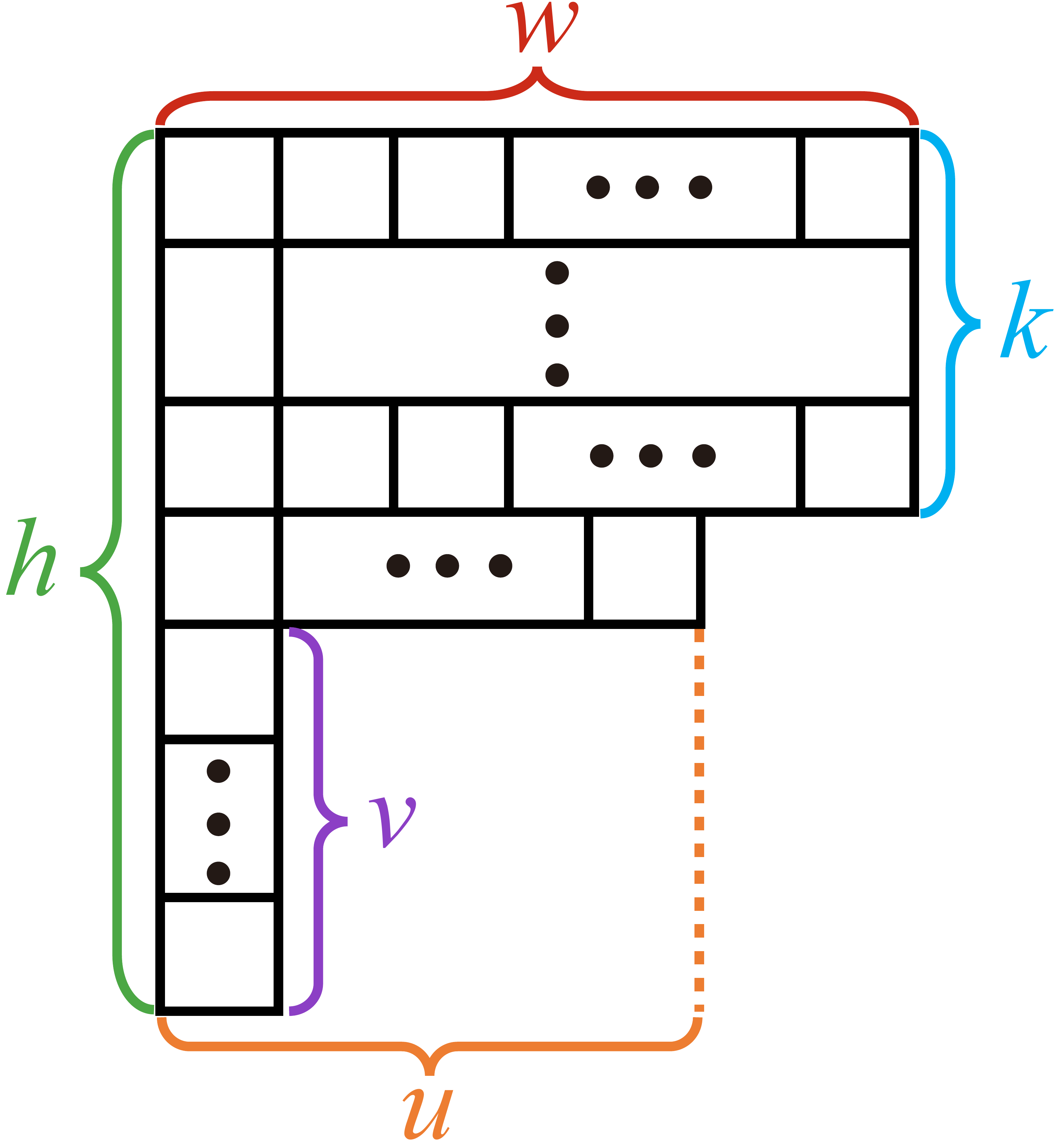}
\caption{
Construction of the partition that maximizes $f_N(w,h)$, defined in Eq.~(%
\protect\ref{eq:fNwh}). See the proof of Lemma~\protect\ref{lm:fwh} for
details.}
\label{fig:3}
\end{figure}

To optimize over entire classes of diagrams, we now introduce the functions
\begin{align}\label{eq:fNwh}
f_N(w,h):=\max_{\Lambda\in\mathcal{L}_{w-\mathrm{prod}}\cap\mathcal{L}_{h-\mathrm{sep}}} \sum_{l=1}^{|\Lambda|}N_l^2.
\end{align}
By construction, we have that $F_Q[\hat{\rho}_{\Lambda},\hat{J}_{\mathbf{n}}]\leq f_N(w,h)$ for all $\Lambda\in\mathcal{L}_{w-\mathrm{prod}}\cap\mathcal{L}_{h-\mathrm{sep}}$, and we thus obtain from Eq.~(\ref{eq:FLambdawh})
\begin{align}\label{eq:FQfNwh}
F_Q[\hat{\rho}_{(w,h)-\rm{sep}},\hat{J}_{\mathbf{n}}]\leq f_N(w,h).
\end{align}
This bound can be saturated by a pure product of GHZ-states $|\Psi_N(w,h)\rangle=\bigotimes_{l=1}^{|\Lambda|}|\mathrm{GHZ}_{N_l}\rangle$, where $\Lambda$ here is the partition that achieves the maximum in~(\ref{eq:fNwh}). We can therefore write equivalently
\begin{align}\label{eq:FQmaxwh}
\max_{\hat{\rho}_{(w,h)-\rm{sep}}}F_Q[\hat{\rho}_{(w,h)-\rm{sep}},\hat{J}_{\mathbf{n}}]=f_N(w,h),
\end{align}
where the maximization includes all $(w,h)$-separable states. The sensitivity limits of $(w,h)$-separable states are therefore determined by the functions $f_N(w,h)$.\begin{lemma}\label{lm:fwh}
The function $f_N(w,h)$ evaluates to
\begin{subequations}\label{eq:fNwhfull}
\begin{align}
f_N(w,h)=kw^{2}+u^{2}+v,
\end{align}%
with
\begin{align}
k&=\left\lfloor \frac{N-h}{w-1}\right\rfloor,\\
u&=N-h+1-(w-1)k,\\
v&=h-k-1.
\end{align}
\end{subequations}
\end{lemma}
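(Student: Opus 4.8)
The plan is to prove the formula in two stages: first exhibit a specific partition (the one sketched in Fig.~\ref{fig:3}) whose sum of squares equals $kw^{2}+u^{2}+v$, establishing $f_N(w,h)\geq kw^{2}+u^{2}+v$; then show by an exchange argument that no admissible partition can do better, establishing the reverse inequality. Throughout I work directly with the combinatorial object: a multiset of part sizes $N_l$ with $1\leq N_l\leq w$, $\sum_l N_l=N$, and at least $h$ parts, and I maximize $\sum_l N_l^{2}$.

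For the lower bound I would take the partition consisting of $k$ subsets of size $w$, one subset of size $u$, and $v$ subsets of size $1$. One first checks admissibility: the total is $kw+u+v=N$ by the definitions of $u$ and $v$; the number of subsets is $k+1+v=h$, so $\Lambda\in\mathcal{L}_{h-\mathrm{sep}}$; and the largest subset has size $\max(w,u)=w$ since $u\leq w$, so $\Lambda\in\mathcal{L}_{w-\mathrm{prod}}$. Its sum of squares is then exactly $kw^{2}+u^{2}+v\cdot 1^{2}$. I would also record the ranges that make $k,u,v$ meaningful: writing $N-h=q(w-1)+s$ with $0\leq s\leq w-2$ gives $k=q$ and $u=s+1\in\{1,\dots,w-1\}$, while $v\geq 0$ follows from the feasibility $N\leq wh$ guaranteed by $w\geq N/h$. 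I would flag the boundary case $N=wh$, where the decomposition degenerates ($u=1$, $v=-1$) but the formula still returns the correct value $hw^{2}$ of the honest partition into $h$ subsets of size $w$.

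For the upper bound I would argue that any maximizer can be brought into this canonical shape without decreasing $\sum_l N_l^{2}$, using two elementary moves. \emph{(i) Balancing move:} if two subsets have sizes $a\leq b$ with $2\leq a$ and $b\leq w-1$, replacing them by $a-1$ and $b+1$ keeps all sizes in $[1,w]$, keeps the number of subsets fixed, and changes the sum of squares by $2(b-a)+2>0$; iterating leaves at most one subset of interior size, so a maximizer has all subsets equal to $w$ or $1$ save at most one. \emph{(ii) Merging move:} if the partition has more than $h$ subsets, I would show that two of them can always be merged into a single subset of size $\leq w$ (two subsets of size $1$, using $w\geq 2$, or a $1$ together with the single interior subset of size $u\leq w-1$), which strictly increases the sum of squares while keeping the count $\geq h$. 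The only configurations that block every width-respecting merge are ``all $w$'s'' ($N=mw$) or ``all $w$'s and a single $1$'' ($N=(m-1)w+1$), and both force $N>wh$ once the number of subsets $m$ exceeds $h$, contradicting feasibility. Hence a maximizer uses exactly $h$ subsets and has the canonical shape.

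Finally, among canonical partitions with exactly $h$ subsets ($k$ of size $w$, one of size $u_k=N-h+1-(w-1)k$, and $h-k-1$ of size $1$) I would optimize over $k$. A direct computation shows the value increases by $2(w-1)(w-u_k)\geq 0$ when $k$ is incremented, so the sum of squares is maximized by taking $k$ as large as the constraint $u_k\geq 1$ permits, namely $k=\lfloor (N-h)/(w-1)\rfloor$, which reproduces $f_N(w,h)=kw^{2}+u^{2}+v$. I expect the main obstacle to be the merging step of the upper bound: one must verify carefully that whenever more than $h$ subsets are present a width-respecting merge exists, and then treat the degenerate endpoints separately (in particular $w=1$, the fully separable case, and $N=wh$) so that the floor expression for $k$ lands exactly on the genuine optimum.
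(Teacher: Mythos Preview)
Your proposal is correct and rests on the same exchange idea as the paper's proof: moving one box from a shorter row to a longer one strictly increases $\sum_l N_l^2$, so the optimal Young diagram piles as many full rows of width $w$ as possible on top, then a single partial row, then singletons. The paper argues this in one line for partitions with \emph{exact} width $w$ and height $h$ and then invokes monotonicity in $(w,h)$; you instead keep the full constraint set (width $\leq w$, height $\geq h$) and use an explicit merging move to drive the height down to $h$. That is a more careful execution of the same strategy rather than a genuinely different route.

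One small point in your merging step: your list of configurations that block every width-respecting merge is missing the case ``all $w$'s together with a single interior part $u'\in[2,w-1]$'' (no singletons present). It is handled identically to your two listed cases---with $m>h$ parts one gets $N=(m-1)w+u'\geq hw+2>hw$, contradicting $N\leq wh$---so the argument goes through, but you should include it when you write the proof out. Your flagged boundary cases ($w=1$ and $N=wh$) are indeed the only places where the formula needs a separate check.
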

\begin{proof}
To identify the maximally sensitive partition, it is instructive to picture partitions in terms of their associated Young diagrams. As $(N_{1}+1)^{2}+(N_{2}-1)^{2}> N_{1}^{2}+N_{2}^{2}$ if $N_{1}>N_{2}$, the sum $s(\Lambda)=\sum_{l=1}^{|\Lambda|}N_l^2$ is maximized by choosing a partition $\Lambda$ with as many parties as possible in the top rows of its associated Young diagram.

Let us first focus on partitions with fixed $(w,h)$. The optimal partition, which achieves the maximum in $s(\Lambda)$, thus has $k\geq 1$ filled rows of maximal width $w$, one partially
filled row with $1\leq u<w$ parties, and the remaining $v$ rows consist of single parties only, see Fig.~\ref{fig:3}. To determine the value of $k$, recall that the fixed height takes up $h$ particles for the left-most column of the Young diagram. The
remaining $N-h$ parties are distributed into a maximum number of $k=\left\lfloor \frac{N-h}{w-1}\right\rfloor $ rows with width $w-1$, where $\lfloor \cdot\rfloor$ indicates the floor function. A number of $u-1=N-h-k(w-1)$ remains when $w-1$ is not a divisor of $N-h$. The total number of particles is
given by $N=kw+u+v$ which leads to $v=h-k-1$.

The optimization in~(\ref{eq:fNwh}) also involves partitions with smaller $w$ and larger $h$. However, these cannot exceed the bound since it increases with $w$ and decreases with $h$ and therefore applies to all partitions in the set $\Lambda_{w-\mathrm{prod}}\cap\Lambda_{h-\mathrm{sep}}$.
\end{proof}
From the construction of the proof it further becomes evident that $f_N(w,h)$ is strictly monotonically increasing with $w$ and strictly monotonically decreasing with $h$.

Inserting Eqs.~(\ref{eq:fNwhfull}) into~(\ref{eq:FQmaxwh}) yields the most general expression for the maximal sensitivity of $(w,h)$-separable states. This result is shown by the blue columns in Figs.~\ref{fig:1} (a), Fig.~\ref{fig:2} (a)-(d) in the main text, and in the examples analyzed below. A simpler expression is obtained by ignoring the separation into integer sets:
\begin{lemma} The function $f_N(w,h)$ satisfies the upper bound
\begin{align}\label{eq:fNwhupper}
f_N(w,h)\leq w(N-h)+N,
\end{align}
\end{lemma}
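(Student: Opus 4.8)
The plan is to prove the inequality directly for every admissible partition rather than substituting the explicit floor-function formula of the previous Lemma. Since $f_N(w,h)$ is the maximum of $s(\Lambda)=\sum_{l=1}^{|\Lambda|}N_l^2$ over all $\Lambda\in\mathcal{L}_{w-\mathrm{prod}}\cap\mathcal{L}_{h-\mathrm{sep}}$, it suffices to establish $s(\Lambda)\leq w(N-h)+N$ for each such $\Lambda$; the maximum is then bounded by the same quantity. The point is that the height only enters through the count $|\Lambda|\geq h$, and the quadratic terms can be linearized using the width constraint.

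The key algebraic step I would use is the decomposition $N_l^2=N_l+N_l(N_l-1)$. Every admissible partition satisfies $1\leq N_l\leq w$ for all parts, so both factors of $N_l(N_l-1)$ are nonnegative and $N_l(N_l-1)\leq w(N_l-1)$. Summing over all parts and using $\sum_l N_l=N$ gives
\begin{align}
s(\Lambda)=\sum_{l}N_l+\sum_{l}N_l(N_l-1)\leq N+w\sum_{l}(N_l-1)=N+w(N-|\Lambda|).
\end{align}
The height constraint $|\Lambda|\geq h$ then yields $N-|\Lambda|\leq N-h$, so $s(\Lambda)\leq w(N-h)+N$, which is the claim. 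I would stress that this per-partition bound uses only $N_l\leq w$ and $|\Lambda|\geq h$, so it applies uniformly to the entire set $\mathcal{L}_{w-\mathrm{prod}}\cap\mathcal{L}_{h-\mathrm{sep}}$, including partitions with strictly smaller width or strictly more parts than the maximizer identified in Lemma~\ref{lm:fwh}.

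As a consistency check I would also verify the bound against the closed form $f_N(w,h)=kw^2+u^2+v$ of the previous Lemma. Writing $N=kw+u+v$ and eliminating $v$ and $h$ via $N-h=k(w-1)+(u-1)$, the difference $w(N-h)+N-f_N(w,h)$ simplifies to $-(u-1)(u-w)$, which is nonnegative because $1\leq u\leq w-1$ by construction; equality holds precisely when $u=1$, i.e.\ when $(w-1)$ divides $N-h$. This confirms that the relaxed bound~(\ref{eq:fNwhupper}) is tight exactly in this regime and is otherwise strict. The main obstacle here is conceptual rather than computational: recognizing the linearization $N_l(N_l-1)\leq w(N_l-1)$ that trades the quadratic growth in each row for the linear budget $N-|\Lambda|$, after which no case analysis of the floor function is needed.
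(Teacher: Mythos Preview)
Your proof is correct and takes a genuinely different route from the paper's. The paper substitutes the closed form $f_N(w,h)=kw^2+u^2+v$ from Lemma~\ref{lm:fwh}, rewrites it as a quadratic $(w-1)^2k^2+bk+c$ in the integer variable $k=\lfloor(N-h)/(w-1)\rfloor$, argues that this parabola is increasing on the interval $(k_0-1,k_0]$ containing $k$ (with $k_0=(N-h)/(w-1)$), and then replaces $k$ by $k_0$ to obtain~(\ref{eq:fNwhupper}). You instead bound every admissible partition directly via the linearization $N_l^2=N_l+N_l(N_l-1)\leq N_l+w(N_l-1)$ and sum, using only the defining constraints $N_l\leq w$ and $|\Lambda|\geq h$. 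Your argument is more elementary and self-contained: it does not rely on the explicit floor-function formula of Lemma~\ref{lm:fwh}, it avoids any analysis of how the floor interacts with the parabola, and it makes transparent why the bound has the shape $N+w(N-h)$---namely, $N-h$ is the budget of ``extra'' particles beyond one per row, each contributing at most $w$ to the quadratic excess. The paper's approach, by contrast, is purely computational once the closed form is in hand and yields the same tightness information (equality when $k=k_0$, i.e.\ $u=1$) that you recover in your consistency check $(u-1)(w-u)\geq 0$.
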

This implies the result $F_Q[\hat{\rho}_{(w,h)-\rm{sep}},\hat{J}_{\mathbf{n}}]\leq w(N-h)+N$ that was given in Eq.~(\ref{eq:Fwh}) in the main text.
\begin{proof}
We rewrite~(\ref{eq:fNwhfull}) as $f_{N}(w,h)=(w-1)^{2}k^{2}+bk+c$ with $b=(w-1)(w-1+2h-2N)$, and $c=(N-h)^{2}+2N-h$. As a function of $k$ this parabola is strictly monotonically increasing for $k>k_0-1/2$ with $k_0=\frac{N-h}{w-1}$. Since $k_0-1<k\leq k_0$, we obtain an upper bound by replacing $k$ by $k_0$, yielding the expression~(\ref{eq:fNwhupper}).
\end{proof}

It is interesting to maximize $f_N(w,h)$ over either one of the two arguments to identify the sensitivity limits of $w$-producible or $h$-separable states. The maximum values are obtained when $h$ is smallest or $w$ is largest. The smallest value is given by $h=\left\lceil \frac{N}{w}\right\rceil$, where $\lceil\cdot\rceil$ is the ceiling function. Inserting this into Eq.~(\ref{eq:fNwhfull}) yields
\begin{align}\label{eq:flrw}
f^{\leftrightarrow}_N(w)=\max_hf_N(w,h)=sw^{2}+t^{2},
\end{align}
where $s=\left\lfloor \frac{N}{w}\right\rfloor $ and $t=N-\left\lfloor \frac{%
N}{w}\right\rfloor w$. As we may equally write $f^{\leftrightarrow}_N(w)=\max_{\Lambda\in\mathcal{L}_{w-\mathrm{prod}}}\sum_{l=1}^{|\Lambda|}N_l^2$, this implies the well-known result~\cite{HyllusTothPRA2012}
\begin{align}\label{eq:FQmaxw}
\max_{\hat{\rho}_{w-\rm{prod}}}F_Q[\hat{\rho}_{w-\rm{prod}},\hat{J}_{\mathbf{n}}]= f^{\leftrightarrow}_N(w),
\end{align}
which is seen as the red projection in the back of Fig.~2 (a) in the main manuscript. By ignoring the fact that for non-integer $N/w$ we cannot divide the set of $N$ particles into $w$ groups, i.e., by removing the floor function in the above expression, we obtain the simpler upper bound $\max_{\hat{\rho}_{w-\rm{prod}}}F_Q[\hat{\rho}_{w-\rm{prod}},\hat{J}_{\mathbf{n}}]\leq wN$, as stated in Eq.~(4) in the main manuscript.

Similarly, by replacing in Eq.~(\ref{eq:fNwhfull}) $w$ by its largest possible value $w=N+1-h$, we obtain the bound
\begin{align}\label{eq:fudh}
f^{\updownarrow}_N(h)=\max_wf_N(w,h)=(N+1-h)^{2}+h-1.
\end{align}
With $f^{\updownarrow}_N(h)=\max_{\Lambda\in\mathcal{L}_{h-\mathrm{sep}}}\sum_{l=1}^{|\Lambda|}N_l^2$, this implies~\cite{HongPRA2015}
\begin{align}\label{eq:FQmaxh}
\max_{\hat{\rho}_{h-\rm{sep}}}F_Q[\hat{\rho}_{h-\rm{sep}},\hat{J}_{\mathbf{n}}]=f^{\updownarrow}_N(h),
\end{align}
which is shown as green projection in Fig.~2~(a) in the main manuscript.

\section{Sensitivity limits on states with Dyson's rank $r$}
To determine the sensitivity limit on states with bounded Dyson's rank, we proceed analogously as before and we introduce the functions
\begin{align}\label{eq:fNr}
f^{\nwsearrow}_N(r):=\max_{\Lambda\in\mathcal{L}_{r-\mathrm{rnk}}}\sum_{l=1}^{|\Lambda|}N_l^2,
\end{align}
leading to
\begin{align}\label{eq:FQmaxr}
\max_{\hat{\rho}_{r-\rm{rnk}}}F_Q[\hat{\rho}_{r-\rm{rnk}},\hat{J}_{\mathbf{n}}]=f^{\nwsearrow}_N(r).
\end{align}
\begin{lemma}\label{lm:fr}
For $N+r$ odd, the function $f^{\nwsearrow}_N(r)$ evaluates to
\begin{subequations}\label{eq:Frfull}
\begin{align}\label{eq:frodd}
f^{\nwsearrow}_N(r)=\frac{1}{4}(N+r+1)^{2}+\frac{1}{2}(N-r-1),
\end{align}
For $N+r$ even, we obtain
\begin{align}\label{eq:freven}
f^{\nwsearrow}_N(r)=\frac{1}{4}(N+r)^{2}+\frac{1}{2} (N-r)+2,
\end{align}
except for the special cases $N+r=10$ and $N+r=16$, where we obtain
\begin{align}\label{eq:frspecial}
f^{\nwsearrow}_N(r)=\begin{cases}34-r, & \quad r=10-N\quad(N\geq 8)\\
76-r, & \quad r=16-N\quad(N\geq 12)
\end{cases}.
\end{align}
\end{subequations}
\end{lemma}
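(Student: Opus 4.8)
The plan is to reduce the combinatorial maximization in~(\ref{eq:fNr}) to the two-variable problem already solved in Lemma~\ref{lm:fwh}, localize the optimum with the smooth bound, and then settle the integer details by parity. A partition $\Lambda$ of width $w$ and height $h$ lies in $\mathcal{L}_{r-\mathrm{rnk}}$ exactly when $w-h\le r$. Since Lemma~\ref{lm:fwh} shows that $\max s(\Lambda)=f_N(w,h)$ over partitions of \emph{fixed} width $w$ and height $h$, and that this maximum is attained by a diagram whose width and height are precisely $w$ and $h$ (for every feasible pair, i.e.\ one with $w+h-1\le N\le wh$), I would first write
\begin{align}
f^{\nwsearrow}_N(r)=\max_{w-h\le r}f_N(w,h).
\end{align}
Using the strict monotonicity of $f_N(w,h)$ in $w$ (increasing) and in $h$ (decreasing) noted after Lemma~\ref{lm:fwh}, any feasible move that raises $w$ or lowers $h$ improves the objective, so I would argue the optimum lies on the boundary $w-h=r$ wherever that line meets the feasible region.

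To localize it I would invoke the upper bound~(\ref{eq:fNwhupper}), $f_N(w,h)\le w(N-h)+N$, and maximize its right-hand side along $w-h=r$: this one-variable problem has its stationary point at $w=(N+r)/2$, $h=(N-r)/2$, with value $\tfrac14(N+r)^2+N$, which both explains the leading term of~(\ref{eq:Fr}) and pins the integer optimum to the lattice points of $w-h=r$ nearest $w=(N+r)/2$. I would then split on the parity of $N+r$. When $N+r$ is odd, the point $w=(N+r+1)/2$, $h=(N-r+1)/2$ is integral and satisfies $w+h-1=N$, so it is realized by a pure hook (one row of width $w$ plus $h-1$ singletons); reading off $s=w^2+(h-1)$ reproduces~(\ref{eq:frodd}). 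When $N+r$ is even, $w=(N+r)/2$, $h=(N-r)/2$ are integral but the hook no longer fills $N$, so the concentration principle of Lemma~\ref{lm:fwh} selects a near-hook with one row of width $w$, one row of width $2$, and $h-2$ singletons, whose value $w^2+4+(h-2)$ gives~(\ref{eq:freven}).

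The main obstacle, and the origin of the exceptional cases, is that the integer optimum need not coincide with the relaxation optimum: for special $N$ a perfectly balanced diagram at a neighbouring lattice point (where~(\ref{eq:fNwhupper}) is tight) outperforms the near-hook (where it is not). To capture this I would compare the near-hook against the family of $j$ equal rows of width $a$ padded with singletons, for which the rank constraint reads $a(j+1)\le N+r+j$ and the objective is $ja(a-1)+N$. Writing $M=N+r$ and subtracting the common term $N$, the balanced two-row diagram ($j=2$, $a=(M+2)/3$, integral exactly when $M\equiv1\pmod 3$) beats the even-case near-hook iff $M^2-26M+88<0$, i.e.\ $M\in(4,22)$; intersecting with $M$ even and $M\equiv1\pmod3$ leaves precisely $M\in\{10,16\}$, which yields~(\ref{eq:frspecial}). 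The analogous comparison against the pure hook in the odd case reduces to $M^2-8M+7<0$, whose only value compatible with $M\equiv1\pmod3$ is the boundary tie $M=7$, requiring no exception. The remaining routine step is to confirm that the families with $j\ge3$, whose objective scales like $\tfrac{j}{(j+1)^2}M^2<\tfrac14M^2$, never strictly win for any admissible $M$. I expect this finite, number-theoretic case check—rather than any single estimate—to be the delicate part of the argument.
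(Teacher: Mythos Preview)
Your reduction to the boundary $w-h=r$, the parity split, and the explicit hook and near-hook values all match the paper. The genuine gap is in how you rule out the remaining competitors. The ``family of $j$ equal rows of width $a$ padded with singletons'' covers only a \emph{subset} of the lattice points on the line $w-h=r$: at a generic integer $w'<w_{\max}$ the Lemma~\ref{lm:fwh} optimum at $(w',w'-r)$ consists of $k$ full rows of width $w'$, a middle row of width $u\in\{2,\dots,w'-1\}$, and $v$ singletons, with value $k(w')^{2}+u^{2}+v$, which strictly exceeds the balanced value $k(w')^{2}+u+v$ at that same width whenever $u>1$. By comparing the near-hook only against balanced shapes you are therefore under-estimating what must be beaten; showing that no balanced competitor wins does not establish that no competitor on the boundary does. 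That the actual exceptions at $N+r\in\{10,16\}$ happen to be balanced (there $u=1$) is consistent with your calculation, but your argument does not exclude further exceptions arising from non-balanced optima at intermediate $w'$.

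The paper closes this by parametrizing \emph{every} integer shift $\delta=w_{\max}-w'$ and computing $\Delta(w_{\max},\delta)=f_N(w_{\max}-\delta,\,w_{\max}-\delta-r)-f_N(w_{\max},\,w_{\max}-r)$ directly from~(\ref{eq:fNwhfull}); this $\Delta$ is an upward-opening parabola in the floor-function quantity $x=(w_{\max}-\delta-1)\lfloor\,\cdot\,\rfloor$, and on the admissible range of $x$ one obtains $\Delta\le -\delta(\delta-1)\le 0$ in the odd case and $\Delta\le \delta(3-\delta)$ in the even case. The even bound disposes of all $\delta\ge 3$ uniformly and leaves only the explicit checks at $\delta=1,2$, which produce exactly the two exceptions. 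Your route could be patched by first proving that, for each fixed $k$, the map $w'\mapsto f_N(w',w'-r)$ is a convex quadratic whose vertex lies inside the $k$-interval, so that the integer maximum over each interval is attained at an endpoint (a balanced configuration); but that reduction is not supplied, and the ``routine'' finite check you anticipate is precisely where the missing work lies.
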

The function $f^{\nwsearrow}_N(r)$ is illustrated in Fig.~\ref{fig:Fmaxr} for $N=20$.
\begin{figure}[tb]
\centering
\includegraphics[width=0.49\textwidth]{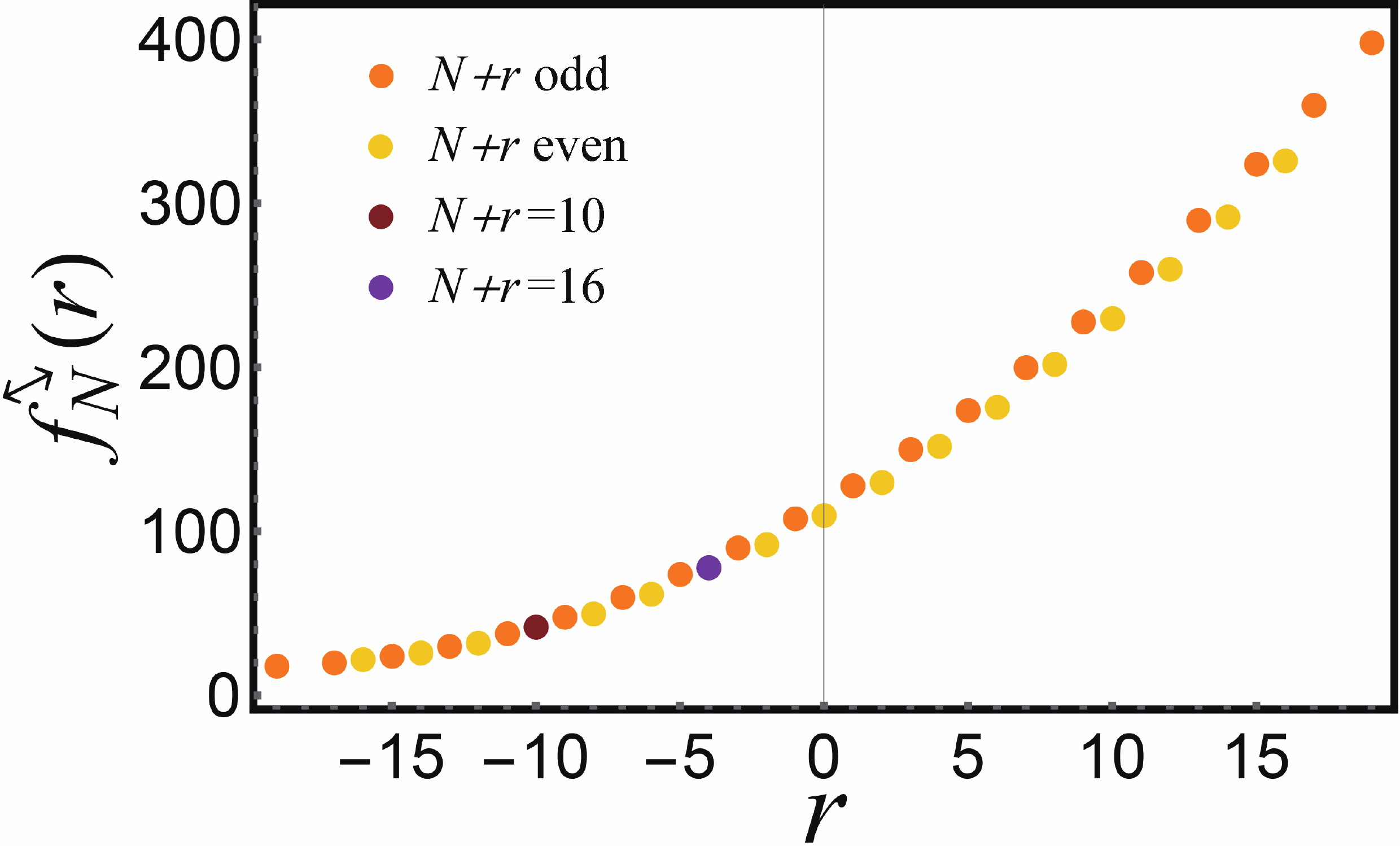}
\caption{
Plot of $f^{\nwsearrow}_N(r)$ for $N=20$. The orange dots present $f^{\nwsearrow}_N(r)$ for $N+r$ is odd as in Eq.~(%
\protect\ref{eq:frodd}). The yellow dots are for $N+r$ even as in Eq.~(%
\protect\ref{eq:freven}). The brown dot describes $r=10-N$ and the purple dot $r=16-N$ as per Eq.~(%
\protect\ref{eq:frspecial}).}
\label{fig:Fmaxr}
\end{figure}
\begin{proof} We can define $f^{\nwsearrow}_N(r)$ in two equivalent ways as $f^{\nwsearrow}_N(r)=\max_hf_N(r+h,h)=\max_wf_N(w,w-r)$. In the following, we focus on the latter and we identify the value of $w$ that maximizes the expression.\\
\textbf{Case 1}: $N+r$ is odd. For fixed $r$, the width $w$ is bounded from
above by $w\leq w_{\max }=\frac{N+r+1}{2}$. The function $f_N(w,r) $ is indeed maximized when $w=w_{\max}$. To see this, first note that inserting $w=w_{\max}$ into Eq.~(\ref{eq:fNwhfull}) yields $k=\lfloor \frac{N-w+r}{w-1}\rfloor =1$, and we obtain
\begin{align}
f_N(w_{\max},w_{\max}-r)&=\frac{1}{4}(N+r+1)^{2}+\frac{1}{2}(N-r-1) \notag\\
&=N+w_{\max}(w_{\max}-1).  \label{o FI wmax r}
\end{align}
In order to show that this is indeed the maximum possible value, we show that
any other choice for $w$ (which necessarily must be lower) at fixed $r$ will
lead to a smaller or equal value for $f_N(w_{\max},w_{\max}-r)$. Reducing the
value of $w$ to $w^{\prime }=w_{\max }-\delta $, where $1\leq \delta \leq
w_{\max }-1$ (the upper bound follows from the condition $w^{\prime }\geq
1$), we obtain
\begin{align}
f_{N}(w^{\prime },w'-r)=f_{N}(w_{\max },w_{\max }-r)+\Delta (w_{\max
},\delta ),  \label{o FI wmax-m r}
\end{align}%
where
\begin{equation*}
\Delta (w_{\max },\delta )=x^{2}+(w_{\max }-5\delta -1)x-\delta (2w_{\max
}-5\delta -3),
\end{equation*}%
and $x=(w_{\max }-\delta -1)\left\lfloor \frac{2\delta }{w_{\max }-\delta -1}%
\right\rfloor $.

Next, we determine the range of $x$. As $q-1<\lfloor q\rfloor\leq q$ for all $q$, it follows by taking $q=a/b$ that
\begin{align}
\forall a>0,\:\forall b\in\mathbb{R}:& & b-a&< a\left\lfloor \frac{b}{a}%
\right\rfloor\leq b  \label{eq:lemma1} \\
\forall a<0,\:\forall b\in\mathbb{R}:& & b&\leq b\left\lfloor \frac{b}{a}%
\right\rfloor< b-a .  \label{eq:lemma2}
\end{align}
Using $w_{\max }\geq \delta +1$, we obtain from Eq.~(\ref{eq:lemma1}) that $%
x_1< x\leq x_2$, where we introduced $x_1=3\delta -w_{\max }+1$ and $%
x_2=2\delta$.

We observe that $\Delta (w_{\max },\delta )$ is a parabola opening upwards with $%
\Delta (w_{\max },\delta )|_{x=x_i}=-\delta (\delta -1)\leq 0$ for $i=1,2$.
The function $\Delta (w_{\max },\delta )$ is thus negative for all values of
$x$ in its range between $x_{1}$ and $x_{2}$, i.e., $%
\Delta (w_{\max },\delta )\leq 0$. In Eq.~(\ref{o FI wmax-m r}) this
implies that $f_{N}(w_{\max},w_{\max}-r)\geq f_{N}(w_{\max}-\delta,w_{\max}-\delta-r)$ for all $\delta $, which proves the result: For odd $N+r$, we have
\begin{align}
f^{\nwsearrow}_N(r)=\max_{w}f_{N}(w,w-r)=\frac{1}{4}(N+r+1)^{2}+\frac{1}{2}(N-r-1),  \label{odd FI max r}
\end{align}
and the maximum is achieved by $w=\frac{N+r+1}{2}$.

\textbf{Case 2}: $N+r$ is even. For fixed $r$, the width $w$ is upper
bounded by $w\leq w_{\max}=\frac{N+r}{2}$. We show that in this case, the function $f_{N}(w,w-r)$ is maximized when $w=w_{\max}$ with the result
\begin{align}
f_{N}(w_{\max},w_{\max}-r)=\frac{1}{4}(N+r)^{2}+\frac{1}{2}(N-r)+2,
\label{e FI wmax r}
\end{align}
except when $N+r=10$ or $N+r=16$.

As $r$ is limited to the values
\begin{align}\label{eq:rrange}
r\in\{-(N-1),&-(N-1)+2,-(N-1)+3,\notag\\&\dots,(N-1)-3,(N-1)-2,N-1\},
\end{align}
the smallest even value of $N+r$ is $4$. In this case, using Eq.~(\ref{eq:fNwhfull}) with $w=w_{\max}$, we get $k=\lfloor\frac{N+r}{N+r-2}\rfloor=2$ and $f_N(w_{\max},w_{\max}-r)=N+4$. When $N+r>4$, we find $%
k=1 $ and $f_N(w_{\max},w_{\max}-r)=\frac{1}{4}(N+r)^{2}+\frac{1}{2}(N-r)+2$. In
short, we can express the result for arbitrary even values of $N+r$ as in Eq.~(\ref{e FI wmax r}).

To prove that this value is maximal in the stated cases, we proceed as in the odd case before. Reducing $w$ to $w^{\prime }=w_{\max }-\delta $, where $1\leq \delta
\leq w_{\max }-1$, we obtain $f_{N}(w^{\prime },r)=f_N(w_{\max },w_{\max}-r)+\Delta (w_{\max },\delta )$ with
\begin{align}
\Delta (w_{\max },\delta )=x^{2}+(w_{\max }-5\delta -3)x+\delta (5\delta
-2w_{\max }+7),  \label{e Delta F wmax w-m}
\end{align}%
and $x=(w_{\max }-\delta -1)\left\lfloor \frac{2\delta +1}{w_{\max
}-\delta -1}\right\rfloor $ is limited to the range $3\delta +2-w_{\max }< x\leq 2\delta
+1$.

We begin by demonstrating that $\Delta (w_{\max },\delta )\leq\Delta
(w_{\max }^{0},\delta )$ holds for all values of $w_{\max }$, where $w_{\max
}^{0}=3\delta +2$. At $w_{\max }=w_{\max }^{0}$, we have $x=2\delta +1$ and $%
\Delta (w_{\max }^{0},\delta )=\delta (3-\delta )$. For $%
w_{\max }>w_{\max }^{0}$, we obtain $x=0$ and
\begin{align}
\Delta (w_{\max },\delta )=\Delta _{0}(w_{\max },\delta ):=\delta (5\delta
-2w_{\max }+7).  \label{e Delta F wmax0}
\end{align}
For all $w_{\max }> w_{\max}^{0}$ we obtain $\Delta (w_{\max },\delta )<\Delta (w_{\max }^{0},\delta)$. For the opposite scenario, $%
w_{\max }<w_{\max }^{0}$, we introduce $d>0$ such that $w_{\max }=3\delta
+2-d$. Choosing $x$ at the extreme values of $x_{1}=3\delta +2-w_{\max }$ or $x_{2}=2\delta +1$ yields  $\Delta (w_{\max },\delta ) = \delta (3-\delta )-d$. As a function of $x$, the parabola $\Delta (w_{\max },\delta )$ opens upwards, and we thus have $\Delta (w_{\max },\delta )\leq \delta (3-\delta )-d<\delta (3-\delta )$ for all $x\in[x_1,x_2]$. Combining all cases, we conclude that
\begin{align}
\Delta (w_{\max },\delta )\leq \delta (3-\delta )
\label{e Delta F wmax bound}
\end{align}%
holds for all $w_{\max }$.

Since $\Delta (w_{\max },\delta )\leq
0 $ for $\delta \geq 3$ we can limit our attention to the cases $\delta=1,2$. For $\delta =1$,
we obtain $w_{\max }^{0}=5$ and the corresponding $\Delta
(5,1)=2$. For all $w_{\max}>w^0_{\max}$, we have $\Delta (w_{\max },1)=12-2w_{\max }\leq 0$. For the remaining possibilities $w_{\max}\in\{3,4\}$, we get $\Delta
(w_{\max },\delta )=0$. Hence, for $\delta=1$, we achieve $\Delta (w_{\max },1
)<0$ only at $w_{\max}=5$, i.e., when $N+r=10$. The partition that achieves the maximum has $k=2$ entangled sets of $w=4$ particles (Fig.~\ref{fig:3}) and this scenario is possible only for $N\geq 8$. In this case, we obtain with Eq.~(\ref{eq:fNwhfull}) that $f^{\nwsearrow}_N(r)=f_N(w_{\max}-1,w_{\max}-1-r)=34-r$.

Similarly, in the case of $\delta =2$ we obtain $w_{\max}^{0}=8$ and $\Delta (8,2)=-2$. Following a similar argument, this implies that $\Delta (w_{\max },2
)<0$ can only be achieved by $w_{\max}=8$, i.e., for $N+r=16$. The optimal partition has $k=2$ entangled sets of size $w=6$, and this is possible only when $N\geq 12$. In this case, Eq.~(\ref{eq:fNwhfull}) yields $f^{\nwsearrow}_N(r)=f_N(w_{\max}-2,w_{\max}-2-r)=76-r$.

To summarize, for $N+r$ even, we find that
\begin{align}
f^{\nwsearrow}_N(r)=\frac{1}{4}(N+r)^{2}+\frac{1}{2}%
(N-r)+2,  \label{even FI max r 1}
\end{align}
and this value is achieved by $w=w_{\max}=\frac{N+r}{2}$ except when $N+r=10$ and $N\geq 8$ or $N+r=16$ and $N\geq
12$. In these cases optimal partitions have $k=2$ and for $N+r=10$, the maximum is achieved by $f^{\nwsearrow}_N(r)=f_N(w_{\max}-1,w_{\max}-1-r)=34-r$, whereas for $N+r=16$ we find $f^{\nwsearrow}_N(r)=f_N(w_{\max}-2,w_{\max}-2-r)=76-r$.
\end{proof}

Again, we can find a simpler upper bound $ f^{\nwsearrow}_N(r)\leq  f^{\nwsearrow,\mathrm{up}}_N(r)$ for $N+r\neq 4$:
\begin{align}
 f^{\nwsearrow,\mathrm{up}}_N(r)=\left(\frac{N+r+1}{2}\right)\left(\frac{N+r-1}{2}\right)+N. \label{eq:Fr bound}
\end{align} 
leading to Eq.~(\ref{eq:Fr}) in the main text. 
\begin{proof}Note that this bound coincides with the expression~(\ref{eq:frodd}) for $N+r$ odd. In the even case, the difference between Eq.~(\ref{eq:Fr bound}) and Eq.~(\ref{eq:freven}) is 
\begin{align}
\delta= f^{\nwsearrow,\mathrm{up}}_N(r)-f^{\nwsearrow}_N(r)=\frac{N+r}{2}-\frac{9}{4}, \label{eq:Fr bound minus even}
\end{align}
with $\delta>0$ for $N+r>4$. For the smallest even value of $N+r=4$, we have $\delta<0$. The tight bound~(\ref{e FI wmax r}) yields $N+4$ for $N+r=4$. Finally, it is easily verified that the bound Eq.~(\ref{eq:Fr bound}) exceeds Eq.~(\ref{eq:frspecial}) in the given special cases.
\end{proof}

\begin{figure}[tb]
\centering
\includegraphics[width=0.49\textwidth]{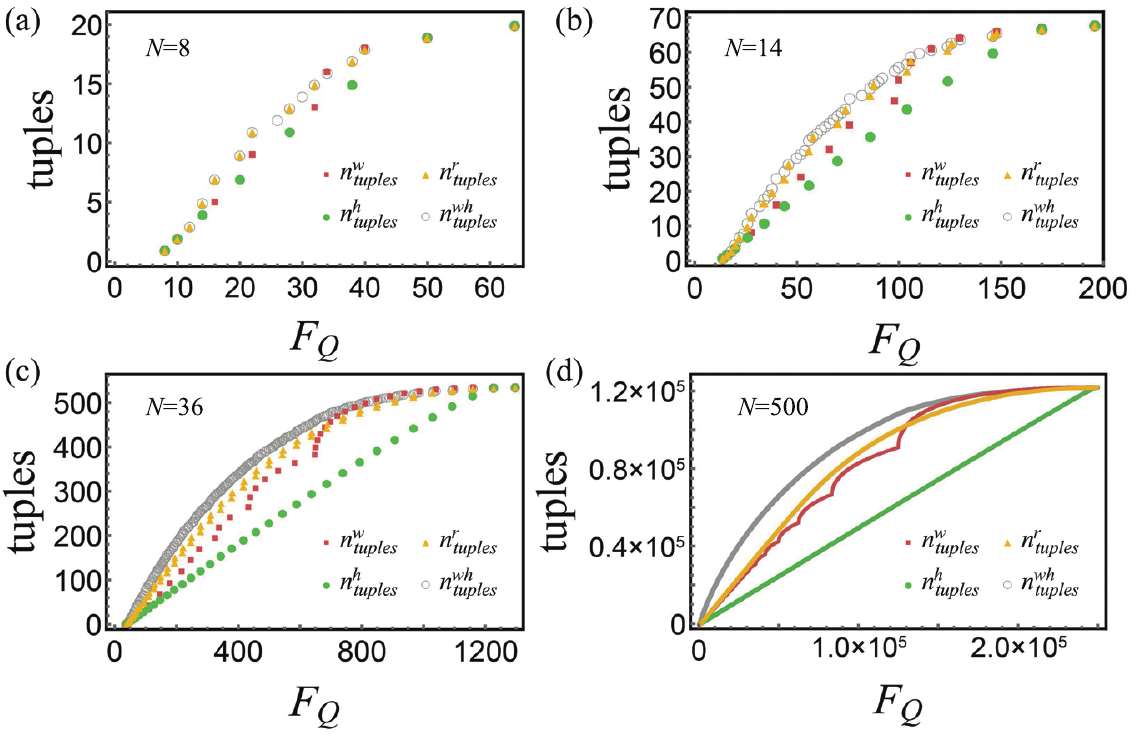}
\caption{Excluded tuples by comparing the Fisher information to separability bounds. We show the excluded tuples by $f^{\leftrightarrow}_N(w)$ (red dots), $f^{\updownarrow}_N(w)$ (green dots), $f^{\nwsearrow}_N(r)$ (yellow dots), and $f_N(w,h)$ (gray dots). Generally, $f^{\nwsearrow}_N(r)$ excludes more tuples than the alternative single integer quantifiers based on $w$ and $h$ when $F_Q$ is small. The entanglement depth, i.e., $f^{\leftrightarrow}_N(w)$ provides most information when $F_Q$ is close to its maximal value $N^2$.}
\label{fig:01}
\end{figure}

\section{Number of excluded tuples by different bounds}
Any measured lower bound on $F_Q$ now imposes limits on the possible values of $w$, $h$, $r$ and $(w,h)$ via the bounds~(\ref{eq:FQmaxw}), (\ref{eq:FQmaxh}), (\ref{eq:FQmaxwh}), and~(\ref{eq:FQmaxr}), respectively. Each of these bounds excludes a different amount of tuples $(w,h)$. To gauge the power of the different entanglement witnesses, we quantify the number of tuples $(w,h)$ that are identified as inseparable as a function of the value of $F_Q[\hat{\rho},\hat{J}_{\mathbf{n}}]$.
For tuples with fixed $w$, their height $h$ has a domain of $\lceil N/w\rceil\leq h\leq N+1-w$%
, i.e., there are $N+2-w-\left\lceil \frac{N}{w}\right\rceil$ tuples with
fixed $w$. The total number of tuples with width no larger than $w$ is then given by the sum
\begin{align}
n^w_{\mathrm{tuples}}(w)&=\sum_{w_{i}=1}^{w}\left(N+2-w_{i}-\left\lceil \frac{N}{w_{i}}\right\rceil\right)\notag\\
&=\frac{w}{2}(2N-w+3)-\sum_{w_{i}=1}^{w}\left\lceil \frac{N}{w_{i}}\right\rceil.
\end{align}
Analogously, making use of the bounds $\lceil N/h\rceil\leq w\leq N+1-h$ we obtain that there exist $N+2-h-\lceil N/h\rceil$ tuples with fixed $h$, leading to a total number of
\begin{align}
n^h_{\mathrm{tuples}}(h)&=\sum_{h_{i}=h}^{N}\left(N+2-h_{i}-\left\lceil \frac{N}{h_{i}}\right\rceil\right)\notag\\
&=\frac{1}{2}(N-h+4)(N-h+1)-\sum_{h_{i}=h}^{N}\left\lceil \frac{N}{h_{i}}\right\rceil
\end{align}
tuples with height larger or equal to $h$. Finally, using $\left\lceil\frac{\sqrt{r^{2}+4N}+r}{2}\right\rceil\leq w\leq \left\lfloor\frac{N+1+r}{2}\right\rfloor$, and with the possible values of $r$ given in Eq.~(\ref{eq:rrange}), we obtain $n^r_{\mathrm{tuples}}(1-N)=1$, $n^r_{\mathrm{tuples}}(N-1)=N(N+3)/2+\sum_{w=1}^N\left\lceil\frac{N}{w}\right\rceil$ and for the remaining values of $3-N\leq r\leq N-3$, we obtain
\begin{align}
&\quad n^r_{\mathrm{tuples}}(r)\\&=N+r-1+\sum_{r_{i}=3-N}^{r}\left(\left\lfloor \frac{%
N+1+r_{i}}{2}\right\rfloor -\left\lceil \frac{\sqrt{r_{i}^{2}+4N}+r_{i}}{2}%
\right\rceil \right)\notag
\end{align}
tuples with rank no larger than $r$. The number $n^{wh}_{\mathrm{tuples}}(w,h)$ of tuples excluded by comparing to the bounds on $(w,h)$ depends on the value of $F_Q$ and is determined numerically. The number of excluded tuples is compared in Fig.~\ref{fig:01}.

\section{Separability bounds on the spin-squeezing coefficient}\label{sec:xi}
The Wineland \textit{et al.} spin-squeezing coefficient fulfils the following upper bound for arbitrary $\Lambda$-separable states~\cite{FadelPRA2020}
\begin{align}\label{xi-2bound}
N\xi _{\rho_{\Lambda} }^{-2}\leq \frac{1}{2}\sum_{l=1}^{|\Lambda|}N_{l}^{2}+N.
\end{align}
This bound can be asymptotically saturated when all $N_l>1$. Inserting~(\ref{eq:maxLambda}), which can always be saturated by an optimal state, and maximizing over all states yields the expression~(\ref{eq:xiFQmax}) in the main text. Hence, we find that for all $(w,h)$-separable states
\begin{align}
N(\xi_{\rho_{(w,h)-\mathrm{sep}}}^{-2}-1)\leq \frac{1}{2} f_N(w,h).
\end{align}
The left-hand side is the difference between the sensitivity and the shot-noise limit. Analogous results hold for states with finite $w$, $h$, or $r$, using the corresponding functions $f^{\leftrightarrow}_N(w)$, $f^{\updownarrow}_N(h)$, and $f^{\nwsearrow}_N(w)$, respectively.

We find the separability limit for the spin-squeezing coefficient, e.g., for $(w,h)$-separable states
\begin{align}
\xi_{\rho_{(w,h)-\mathrm{sep}}}^{2}&\geq \frac{2N}{f_N(w,h)+2N},
\end{align}
and the explicit expression for $f_N(w,h)$ can be found in Eq.~(\ref{eq:fNwhfull}). The resulting bound is plotted in Fig.~\ref{fig:1}~(b) in the main text on dB scale. We find analogous expressions for the bounds on $w$, $h$ and $r$ from the respective bounds~(\ref{eq:flrw}), (\ref{eq:fudh}), and~(\ref{eq:Frfull}). Using the upper bounds that ignore the division into integer subsets, we obtain simpler expressions. For instance, Eq.~(\ref{eq:fNwhupper}) implies that $(w,h)$-separable states cannot reduce the phase uncertainty below
\begin{align}\label{eq:xiwh}
\xi_{\rho_{(w,h)-\mathrm{sep}}}^{2}&\geq \frac{2N}{w(N-h)+3N}. 
\end{align}
By projecting only on the information provided by $w$ we recover the result~\cite{FadelPRA2020}
\begin{align}\label{eq:xiw}
\xi_{\rho_{w-\mathrm{prod}}}^{2}&\geq \frac{1}{1+w/2},
\end{align}
and similarly for $h$:
\begin{align}\label{eq:xih}
\xi_{\rho_{h-\mathrm{sep}}}^{2}&\geq \frac{2N}{(N-h+1)^2+h-1+2N}.
\end{align}
Finally, the sensitivity gain is bounded in terms of $r$ as
\begin{align}\label{eq:xirSupp}
\xi_{\rho_{r-\mathrm{rnk}}}^{2}&\geq \frac{2N}{\left(\frac{N+r+1}{2}\right)\left(\frac{N+r-1}{2}\right)+3N},
\end{align}
which coincides with Eq.~(\ref{eq:xir}) in the main text.

\section{Analysis of experimental data}
To analyze experimental data, we compare measured lower bounds for $F_Q$ or $\xi^2$ to the bounds~(\ref{eq:FQmaxwh}), (\ref{eq:FQmaxw}), (\ref{eq:FQmaxh}), and~(\ref{eq:FQmaxr}). In Fig.~\ref{fig:4} we show how $F_Q$ interpolates between the separable bound $N$ and the genuine multipartite entangled quantum limit $N^2$ as a function of $w$, $h$, and $r$ for small values of $N$.
\begin{figure}[tbp]
\centering
\includegraphics[width=0.35\textwidth]{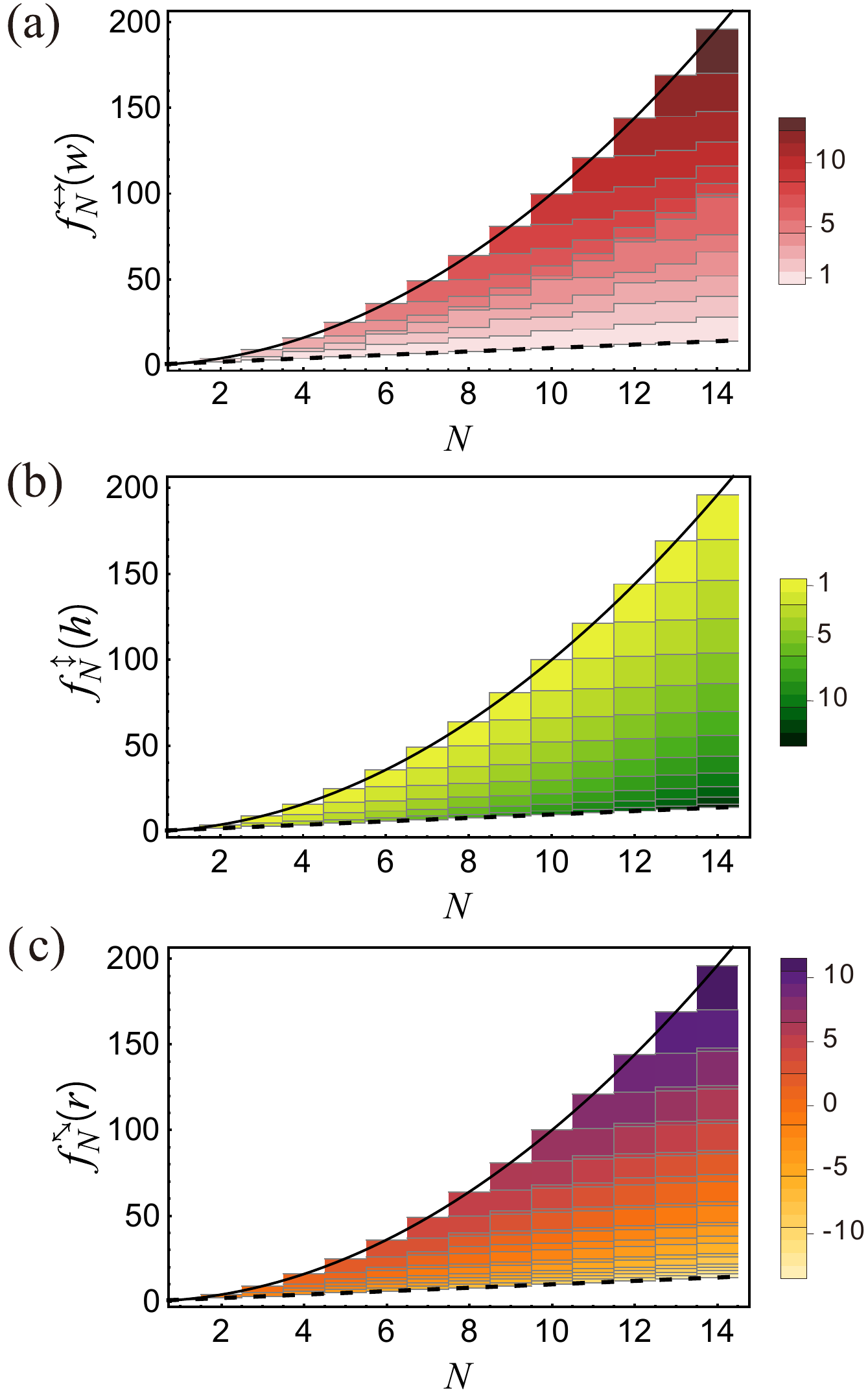}
\caption{Maximal values of the quantum Fisher information as a function of $w$ (a)~\cite{PezzePNAS2016}, $h$ (b), and $r$ (c) for $N$ up to $14$.}
\label{fig:4}
\end{figure}
In the main text, the data on $F_Q$ for $N=14$ ions from Ref.~\cite{MonzPRL2011} was analyzed in detail in Fig.~3, while summaries were presented for the data on $F_Q$ with $N=127$ and $\xi^2$  from~\cite{BohnetSCIENCE2016} and with $N=470$ from~\cite{StrobelSCIENCE2014} in Fig.~4. Here we present the full analysis of the data from~\cite{BohnetSCIENCE2016,StrobelSCIENCE2014}, and we analyze additional experimental data from Refs.~\cite{MonzPRL2011,BarontiniSCIENCE2015}.

\subsection{Data on $F_Q$}
In the main text the data for $N=14$ ions from~\cite{MonzPRL2011} was analyzed. In the same reference, the state closest to genuine multipartite entanglement was produced with $N=8$ ions with $F_Q\geq 39.6$. The corresponding analysis is shown in Fig.~\ref{fig:7}. The width $w$ obtained from Eq.~(\ref{eq:FQmaxw}) is $6$ and the excluded tuples $(w,h)$ are $16$. The height $h$ obtained from Eq.~(\ref{eq:FQmaxh}) is $2$ and the excluded tuples $(w,h)$ are $15$. Dyson's rank $r$ obtained from Eq.~(\ref{eq:FQmaxr}) is $4$ and the excluded tuples $(w,h)$ are $17$. From~(\ref{eq:FQmaxwh}), we can exclude $17$ tuples $(w,h)$ by using full information on the tuple $(w,h)$.
\begin{figure}[tbp]
\centering
\includegraphics[width=0.49\textwidth]{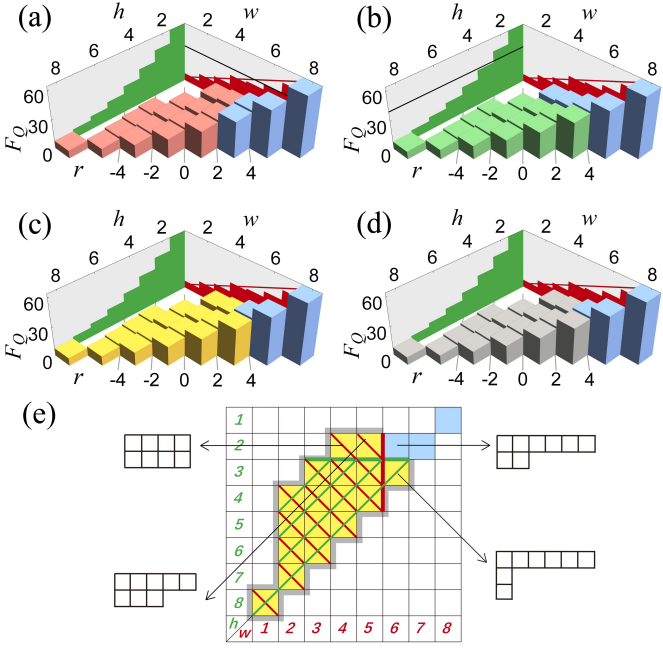}
\caption{The entanglement analysis (same as Fig.~\ref{fig:2} in the main text) of $N=8$ ions reported in~\cite{MonzPRL2011} whose measurements allow to infer a minimal value of $F_Q\geq 39.6$~\cite{PezzePNAS2016}. Since the measured value of $F_Q$ is closer to its maximal value of $N^2=64$, the entanglement depth $w$ provides a good characterization of the entangled tuples.}
\label{fig:7}
\end{figure}

In Fig.~\ref{fig:127} we show the full analysis of the measurement of $F_Q\geq 266.7$ from Ref.~\cite{BohnetSCIENCE2016} for $N=127$ trapped ions. We can exclude $64$ tuples $(w,h)$ from $w=3$, obtained from Eq.~(\ref{eq:xiw}) and $67$ tuples are excluded from $h=115$ via Eq.~(\ref{eq:xih}). Dyson's rank $r=-102$ obtained from Eq.~(\ref{eq:xirSupp}) excludes $133$ tuples and by using the full information provided by~(\ref{eq:xiwh}), we can exclude $236$ tuples.

In Fig.~\ref{fig:5}, we analyze the experimental data from the ultracold atom experiment of Ref.~\cite{BarontiniSCIENCE2015}. The reported data shows $F_Q\geq 54.36$ with $N=36$ atoms. The entanglement depth obtained from Eq.~(\ref{eq:FQmaxw}) is $w=2$, which only excludes the fully separable partition. The separability $h$ obtained from Eq.~(\ref{eq:FQmaxh}) is $32$ and the excluded tuples $(w,h)$ are $7$. Dyson's rank $r$ obtained from Eq.~(\ref{eq:FQmaxr}) is $-27$, which is able to exclude $13$ tuples. We can exclude $18$ tuples $(w,h)$ by using the full information on $(w,h)$, i.e., the bound~(\ref{eq:FQmaxwh}).

\begin{figure}[tbp]
\centering
\includegraphics[width=0.49\textwidth]{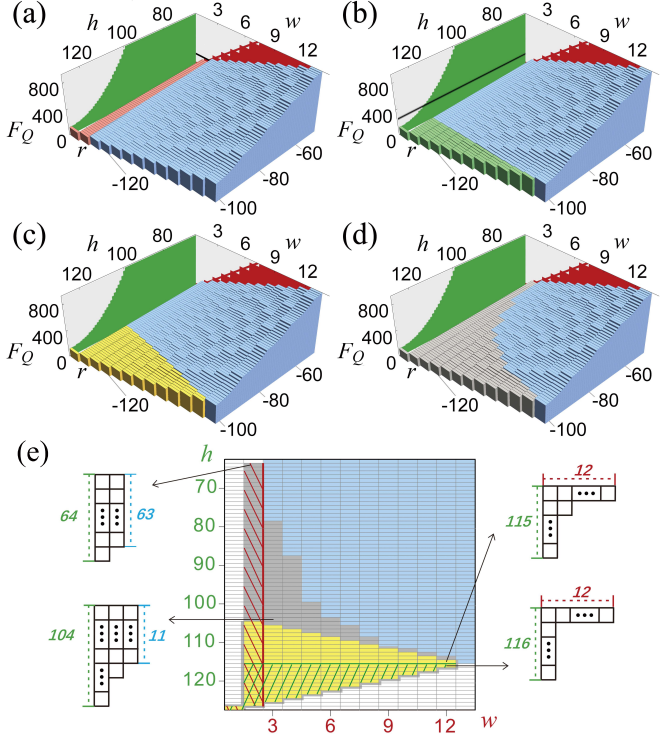}
\caption{Full entanglement analysis (same as Fig.~\ref{fig:2} in the main text) of $N=127$ ions reported in~\cite{BohnetSCIENCE2016} with $F_Q\geq 266.7$.}
\label{fig:127}
\end{figure}

\begin{figure}[tbph]
\centering
\includegraphics[width=0.49\textwidth]{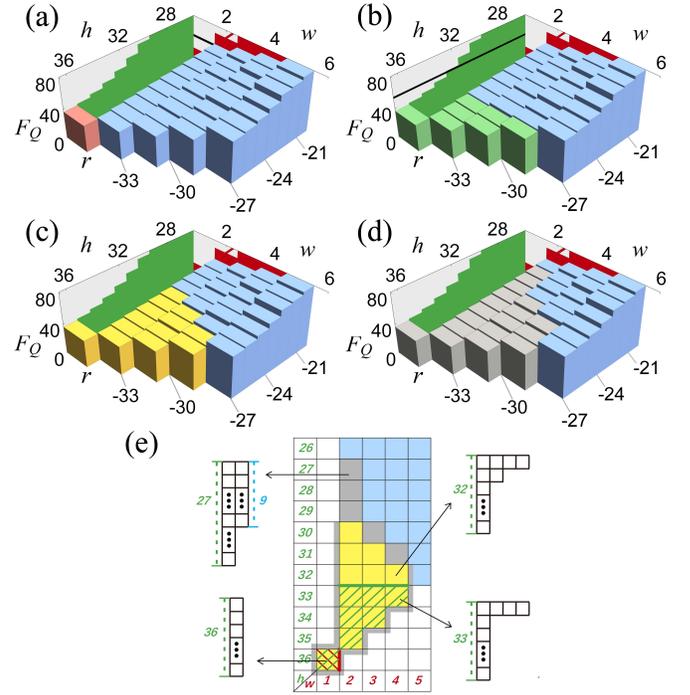}
\caption{The entanglement analysis (same as Fig.~\ref{fig:2} in the main text) of $N=36$ ultracold atoms from \cite{BarontiniSCIENCE2015} with $F_Q\geq 54.36$.}
\label{fig:5}
\end{figure}

\begin{figure}[tbph]
\centering
\includegraphics[width=0.49\textwidth]{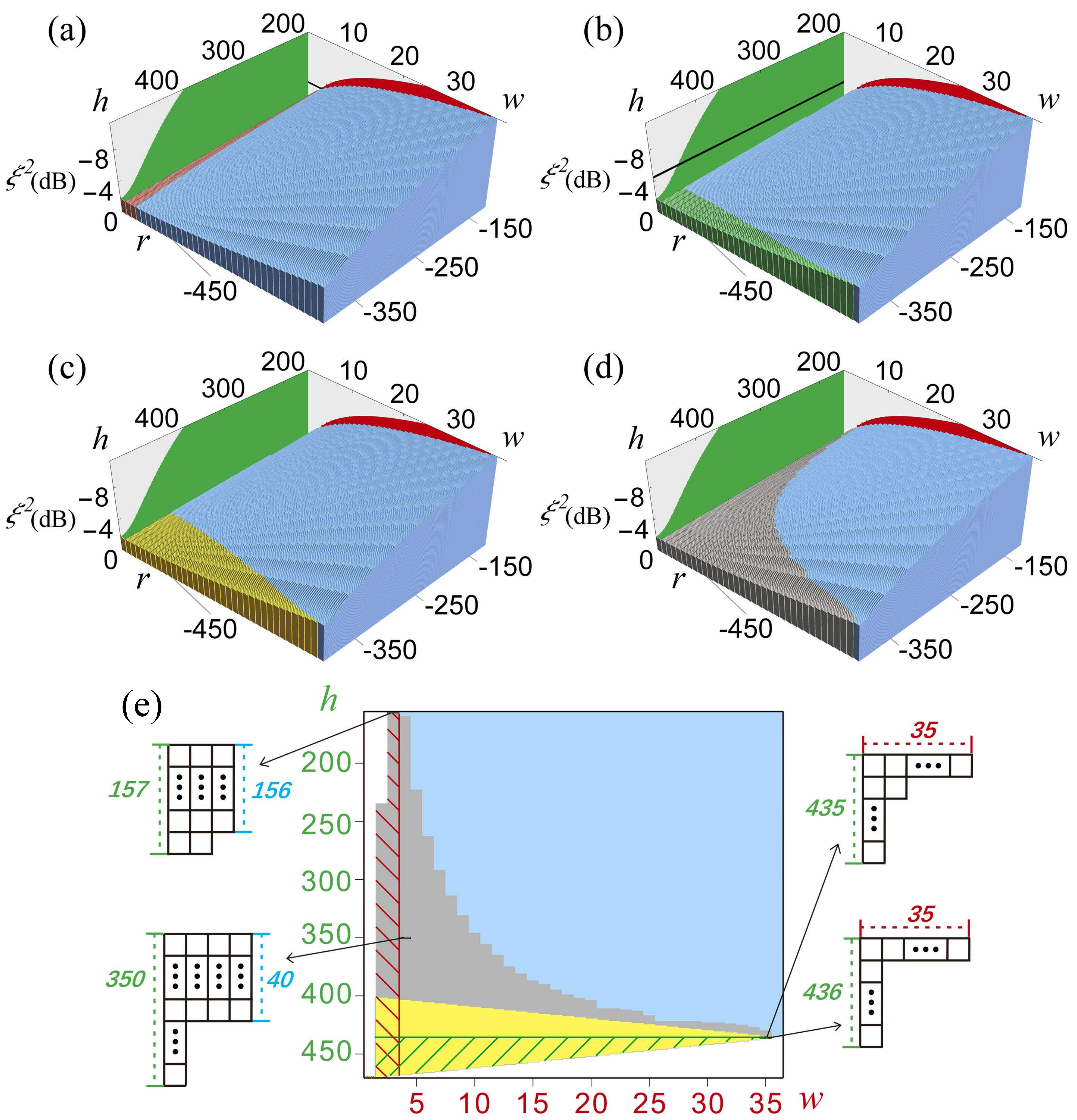}
\caption{The entanglement analysis, based on a measurement of the spin-squeezing parameter of $\xi^2\leq -4.5\,\mathrm{dB}$ with $N=470$ cold atoms, reported in Ref.~\cite{StrobelSCIENCE2014}. The characterization of multipartite entanglement in terms of $w$, $h$, $r$ and $(w,h)$ is analogous to that presented previously (see, e.g., Fig.~\ref{fig:2} in the main text). Here, however, we make use of the corresponding bounds on the spin-squeezing coefficient that were derived in Sec.~\ref{sec:xi}.}
\label{fig:470}
\end{figure}
In Fig.~\ref{fig:470}, we further show the full analysis of the data provided in Fig.~4~(b) of the main text.

\subsection{Data on $\xi^{-2}$}
Measurements of $\xi^2$ are experimentally less demanding than those of the quantum Fisher information $F_Q$. In the literature a large amount of results on $\xi^2$ have been published. Here, we pick the experimental data from~\cite{StrobelSCIENCE2014} as an example, where $N=470$ atoms were prepared with $\xi^2\leq-4.5\,\mathrm{dB}$ of squeezing. The full analysis is shown in Fig.~\ref{fig:470}. We can exclude $548$ tuples $(w,h)$ from $w=4$, obtained from Eq.~(\ref{eq:xiw}) and $596$ tuples are excluded from $h=435$ via Eq.~(\ref{eq:xih}). To compare with experimental data, we make use of the tightest versions of these bounds, based on the fully general expression~(\ref{eq:fNwhfull}). Dyson's rank $r=-399$ obtained from Eq.~(\ref{eq:xirSupp}) excludes $1191$ tuples and by using the full information provided by~(\ref{eq:xiwh}), we can exclude $2941$ tuples.

\end{document}